\newtheorem{Definition}{Definition}
\newtheorem{Lemma}{Lemma}
\newtheorem{Theorem}{Theorem}
\newtheorem{Policy}{Policy}
\newtheorem{Remark}{Remark}
\newtheorem{BM}{BM}
\titlespacing\section{0pt}{8pt plus 4pt minus 4pt}{0pt}
\titlespacing\subsection{0pt}{8pt plus 4pt minus 4pt}{0pt}
\begin{document}

\title{
	\huge{
		Dynamic Uploading Scheduling in mmWave-Based Sensor Networks via Mobile Blocker Detection
	}
}

\author{
Yifei~Sun{\textsuperscript\textdagger}{\textsuperscript\textdaggerdbl},
~Bojie~Lv{\textsuperscript\textdagger},
~Rui~Wang{\textsuperscript\textdagger},
~Haisheng~Tan{\textsuperscript\textsection},
~and~Francis~C.~M.~Lau{\textsuperscript\textdaggerdbl}\\
{\textsuperscript\textdagger}Southern University of Science and Technology, Shenzhen, China \\
{\textsuperscript\textdaggerdbl}The University of Hong Kong, Hong Kong, China \\
{\textsuperscript\textsection}University of Science and Technology of China, Hefei, China
}

\maketitle
\begin{abstract}
	The freshness of information, measured as Age of Information (AoI), is critical for many applications in next-generation wireless sensor networks (WSNs).
	Due to its high bandwidth, millimeter wave (mmWave) communication is seen to be frequently exploited in WSNs to facilitate the deployment of bandwidth-demanding applications.
	However, the vulnerability of mmWave to user mobility typically results in link blockage and thus postponed real-time communications.
	In this paper, joint sampling and uploading scheduling in an AoI-oriented WSN working in mmWave band is considered, where a single human blocker is moving randomly and signal propagation paths may be blocked.
	The locations of signal reflectors and the real-time position of the blocker can be detected via wireless sensing technologies.
	With the knowledge of blocker motion pattern, the statistics of future wireless channels can be predicted.
	As a result, the AoI degradation arising from link blockage can be forecast and mitigated.
	Specifically, we formulate the long-term sampling, uplink transmission time and power allocation as an infinite-horizon Markov decision process (MDP) with discounted cost.
	Due to the curse of dimensionality, the optimal solution is infeasible.
	A novel low-complexity solution framework with guaranteed performance in the worst case is proposed where the forecast of link blockage is exploited in a value function approximation.
	Simulations show that compared with several heuristic benchmarks, our proposed policy, benefiting from the awareness of link blockage, can reduce average cost up to 49.6\%.
\end{abstract}

\IEEEpeerreviewmaketitle

\section{Introduction}
In wireless sensor networks (WSNs), the timeliness of status information sampled at the sensor locations (e.g., camera snapshots or broadband radar signals) is critical for effective monitoring in many real-time and data-intensive applications.
With the ever-increasing dense sensor deployment and bandwidth demand, communication in millimeter wave (mmWave) band has great potential to be integrated in WSNs due to its wide spectrum.
However, because of the sparse propagation paths and the pencil-shaped beams for overcoming the extremely high path loss, mmWave communications may suffer from the blockage problem in dynamic environments.
This will cause severe channel fluctuation and link outage, making the delivered status information outdated \cite{6515173}.

To quantify the freshness of the status information, the concept of \textit{Age of Information} (AoI) has been adopted as a metric\cite{Kenney-2011}.
AoI is defined as the time elapsed since the status update is sampled at the sensor.
From the joint sensing and transmission scheduling perspective, the AoI of samples delivered from the sensors should be as fresh as possible in order to guarantee accurate monitor and synchronous control.
There have been a number of research works on AoI scheduling design in various WSNs\cite{8123937,9729335,8778671,8938128,8736523,9796654,BoZhou-2018-GC}.
In \cite{8123937}, the authors minimized the long-term average AoI of a single sensor with an energy harvesting battery.
The scheduling design was formulated as a continuous time stochastic control problem.
In \cite{9729335}, the problem of status update control in an EH-enabled source via an mmWave link was formulated as a Markov decision process (MDP).
In addition to the single sensor scenario, the works in \cite{8778671,8938128} extended the joint sampling and transmission design to the multi-sensor scenario.
\cite{8778671} minimized the summation of average AoIs of multiple sensors under the sampling and transmission energy constraints via an MDP formulation.
This work was further extended in \cite{8938128} by considering transmission failures and non-uniform sample size.
However, the algorithms proposed in many of the existing works, such as the Q-learning algorithm used in \cite{9729335} and the policy iteration algorithm used in \cite{BoZhou-2018-GC}, are prohibitive in computation complexity which grows exponentially with respect to the number of sensors.
Although, the linear approximation of value functions adopted in \cite{8778671,8938128} can reduce the computation complexity, it is difficult to analyze their performance analytically.
As a result, how to design approximate MDP (AMDP) for AoI-oriented scheduling design with a guarantee on worst-case performance and low computational complexity is still unanswered.

Because of the small wavelength, the link blockage has become one of the open issues in mmWave communications, especially in indoor scenarios where swarms of mobile users may roam around constantly.
Fortunately, with the technique of wireless sensing, the room layout can be reconstructed in advance \cite{JADE,E-Mi}, and the location of the human body can be detected in real time \cite{10005216}, such that potential link degradation due to blockage can be predicted.
To our best knowledge, there exists no previous work on the AoI-oriented scheduling design problem exploiting the environment and human motion detection.
In fact, all the existing works, e.g., \cite{9729335,8778671,8938128,9796654,BoZhou-2018-GC}, adopted the stochastic channel modes, which are oblivious to the signal propagation environment \cite{9593198}.
Hence, link blockage due to human mobility can hardly be captured.

In order to exploit wireless sensing in AoI-oriented scheduling design, we consider joint sampling and uploading scheduling in an mmWave-based sensor network with awareness of the communication environment which includes the static signal reflectors and one dynamic human blocker.
Specifically, in the mmWave-based WSN, the sensors detect their targets, save the sensing samples in their buffers, and deliver the data to a server via an mmWave uplink.
A human blocker is moving randomly, which blocks some signal paths, causing a significant degradation in the corresponding uplink channel.
Sampling at the sensor and uplink transmission are jointly scheduled to minimize the AoI of the data samples at the monitor and energy consumption for sampling and uplink transmission.
The joint scheduling is formulated as an infinite-horizon MDP with discounted cost, so that knowledge of the mmWave propagation channel and human dynamics can be exploited.
The main contributions of this work are summarized as follows:
\begin{itemize}[leftmargin=10pt]
	\item
	      A predictive scheduling framework is provided for environment-aware transmission scheduling.
	      In the proposed MDP formulation, the future AoI degradation due to potential link blockage is naturally considered in the current scheduling according to the Bellman's equations.
	\item
	      We propose a low-complexity AMDP framework with a guarantee of the worst-case performance.
	      Specifically, we first introduce a decoupling principle to design heuristic scheduling polices as the reference policy, whose average cost (i.e., value function) can be derived analytically.
	      With the expression of the value function, the above policy iteration can be formulated analytically, whose optimization efficiency is significantly better than the conventional numerical search.
	      Simulations show that compared with several heuristic benchmarks, our proposed policy, benefiting from the awareness of the link blockage, can reduce the average cost with a high performance gain.
\end{itemize}

The remainder of this paper is structured as follows.
The system model is introduced in Section \ref{sec:system_model}.
In Section \ref{sec:problem_formulation}, the problem of dynamic sampling and uploading scheduling is formulated as an infinite-horizon MDP.
In Section \ref{sec:low-complexity}, a low-complexity suboptimal scheduling framework is proposed with a guarantee on the worst-case performance.
Finally, the performance of the proposed low-complexity scheduling scheme is verified by comparing with benchmarks in Section \ref{sec:simulation}, and the conclusion is drawn in Section \ref{sec:conclusion}.

\section{System Model}
\label{sec:system_model}
In this section, we first give an overview of the mmWave-based WSN including the blocker mobility model, and then define the channel model as well as the AoI model.
We use the following notation throughout this paper.
Bold lowercase $\mathbf{a}$ denotes a column vector, bold uppercase $\mathbf{A}$ denotes a matrix, non-bold letters $a$ and $A$ denote scalar values, and the letter $\mathcal{A}$ denotes a set.
$(a)^{+}$ denotes $\max(0,a)$.
$[\mathbf{A}]_{i,j}$, $\mathbf{A}^{\mathsf{T}}$, and $\mathbf{A}^{\mathsf{H}}$ denote the $(i,j)$-th element, transpose, and conjugate transpose, respectively.
$\mathcal{CN}(m,R)$ denotes complex Gaussian distribution with mean $m$ and variance $R$.
$\mathcal{U}[a,b]$ denotes a uniform distribution over the interval $[a,b]$.
$\mathbb{E}[.]$ denotes an expectation operator.
$\mathbb{I}[.]$ denotes an indicator function.
The main notations used are listed in Table \ref{tab:notations}.

\begin{table}[H]
	\footnotesize
	\centering
	\begin{tabular}{cl}
		\toprule
		\textbf{Symbol}        & \textbf{Description}                                         \\
		\midrule
		$K$/$\mathcal{K}$      & Number/Set of sensors                                        \\
		$\mathbf{H}_{t,k}$     & Channel matrix of $k$-th sensor in $t$-th frame              \\
		$l_{t}^{\mathrm{B}}$   & Location index of the mobile blocker in $t$-th frame         \\
		$Y_{t,k}$              & Baseband channel power gain of $k$-th sensor in $t$-th frame \\
		$Q_{t,k}$              & Queue length of $k$-th sensor in $t$-th frame                \\
		$A_{t,k}^{\mathrm{s}}$ & AoI at $k$-th sensor in $t$-th frame                         \\
		$A_{t,k}^{\mathrm{d}}$ & AoI for $k$-th sensor at the server in $t$-th frame          \\
		$s_{t,k}$              & Sampling action for $k$-th sensor in $t$-th frame            \\
		$\tau_{t,k}$           & Transmission time allocated to $k$-th sensor in $t$-th frame \\
		$p_{t,k}$              & Transmission power of $k$-th sensor in $t$-th frame          \\
		\bottomrule
	\end{tabular}
	\caption{Main notations.}
	\label{tab:notations}
\end{table}

\vspace{-0.2cm}
\subsection{Network Model}
\label{subsec:network_model}
As illustrated in Fig. \ref{fig:network_model}, we consider an mmWave-based wireless monitoring system in a two-dimensional indoor space consisting of one server connected with the base station (BS) and $K$ sensors, where the sensors sense the targets (e.g., taking photos or detecting motions via radar waves), collect sensing samples, and deliver the sensing samples to the server via the BS.
The set of sensors is denoted as $\mathcal{K}\!\triangleq\!\{1,2,\ldots,K\}$.
The sensing sample collection, namely sampling, and sample uploading are scheduled to maintain the timeliness of sensing data at the server.
Both line-of-sight (LoS) and non-line-of-sight (NLoS) paths (e.g., reflection paths from the walls) exist between the sensors and the BS.
There is one person moving in the space, who may block the propagation paths from the sensors to the BS.
In this work, we consider the case of only one mobile blocker in the ambient environment which represents most practical scenarios while leave multi-blocker scenarios to future work.
With wireless sensing techniques\cite{9737357,9198891}, it is assumed that the location of the human blocker can be detected.
The analog MIMO architecture with single radio frequency (RF) chain and a half-wavelength uniform linear phased array (ULA) is adopted at both the BS and the sensors, so that both transmission and receiving beams can be aligned to the available paths.
The linear phased arrays at the BS and sensors have $N_{\mathrm{R}}$ and $N_{\mathrm{T}}$ antenna elements, respectively.
\begin{figure}[tb]
	\centering
	\includegraphics[width=0.96\linewidth]{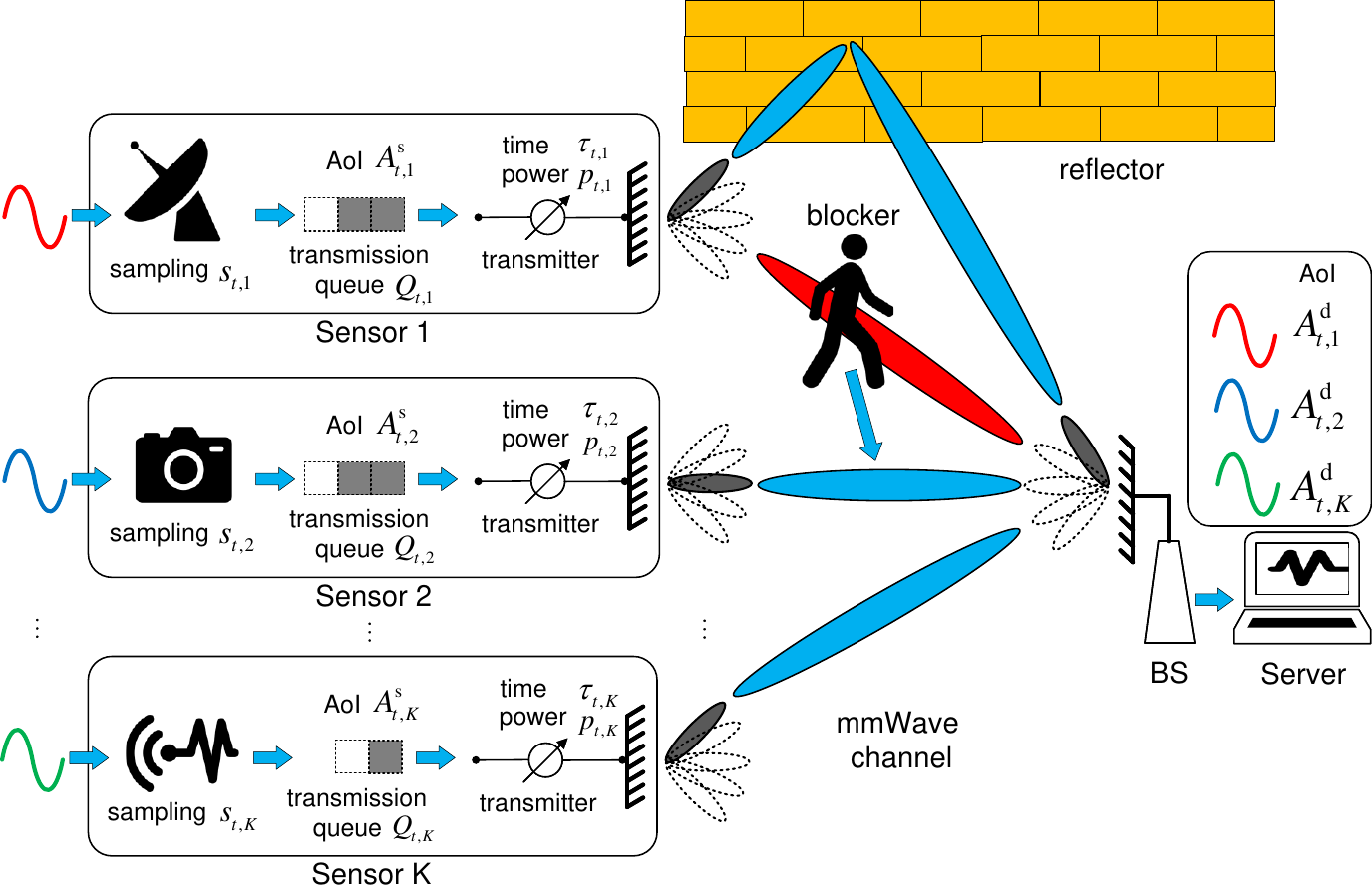}
	\caption{Network model of the considered mmWave-based WSN.}
	\label{fig:network_model}
\end{figure}

The uplink transmission time is organized by physical-layer frames with duration $T_{\mathrm{F}}$, where the channel state information (CSI) is assumed to be quasi-static in one frame.
The locations of the indoor space are quantized into grids with indexes.
Let $\mathbb{L}\triangleq\{1,2,\ldots,|\mathbb{L}|\}$ be the set of location indexes.
The location indexes of the BS and the $k$-th sensors are denoted as $l^{\mathrm{BS}}$ and $l_{k}^{\mathrm{s}}$, respectively.
The location index of the mobile human blocker in the $t$-th frame is denoted as $l_{t}^{\mathrm{B}}$.
It is assumed that the mobility of the blocker follows a time-invariant Markov chain, with the following transition probabilities,
\begin{align}
	\label{eqn:P_B}
	\Pr\big[l_{t+1}^{\mathrm{B}}\!=\!\ell^{\prime}\big|l_{t}^{\mathrm{B}}\!=\!\ell\big]=\big[\mathbf{P}^{\mathrm{B}}\big]_{\ell,\ell^{\prime}},\ \forall{t},\ \forall \ell,\ell^{\prime}\in\mathbb{L},
\end{align}
where $\mathbf{P}^{\mathrm{B}}\in\mathbb{R}^{|\mathbb{L}|\times|\mathbb{L}|}$ denotes the transition matrix of the blocker's mobility.

\subsection{Channel Model}
\label{subsec:channel_model}
In order to capture the impact of human blockage, the geometric channel model \cite{6717211} is adopted in this paper.
There are at most $M$ NLoS paths and one LoS path from one sensor to the BS.
Specifically, denote $\mathcal{M}\!=\!\{0,1,2,\ldots,M\}$ as the index set of propagation paths, where the index of the LoS path is $0$.
For the convenience of exposition, the $i$-th path of the $k$-th sensor is denoted as the $(k,i)$-th path.
Hence, the channel matrix $\mathbf{H}_{t,k}\!\in\!\mathbb{C}^{N_{\mathrm{R}}\times N_{\mathrm{T}}}$ from the $k$-th sensor to the BS in the $t$-th frame can be written as
\begin{align}
	\label{eqn:H_tk}
	\mathbf{H}_{t,k}
	=
	\sum_{i\in\mathcal{M}}B_{t,k,i}(l_{t}^{\mathrm{B}})\alpha_{t,k,i}\mathbf{a}_{\mathrm{R}}(\phi_{k,i})\mathbf{a}_{\mathrm{T}}^{\mathsf{H}}(\theta_{k,i}),
\end{align}
where $B_{t,k,i}(l_{t}^{\mathrm{B}})\!\in\!\{0,1\}$ is the blockage indicator.
Thus, $B_{t,k,i}(l_{t}^{\mathrm{B}})\!=\!0$ when $(k,i)$-th path is blocked, otherwise $B_{t,k,i}(l_{t}^{\mathrm{B}})\!=\!1$.
Moreover, $\phi_{k,i}$ and $\theta_{k,i}$ denote the angle of arrival (AoA) and angle of departure (AoD) of the $(k,i)$-th path, respectively.
$\alpha_{t,k,i}$ denotes the complex gain of the $(k,i)$-th path in the $t$-th frame obeying a complex Gaussian distribution with zero mean and variance $\rho_{k,i}^{-1}$, i.e., $\alpha_{t,k,i}\sim\mathcal{CN}(0,\rho_{k,i}^{-1})$.
The path loss $\rho_{k,i}$ depends on the path length $R$ and the reflection loss (for NLoS path).
Moreover, $\mathbf{a}_{\mathrm{R}}(\phi_{k,i})$ and $\mathbf{a}_{\mathrm{T}}(\theta_{k,i})$ represent the normalized array response vectors of the ULAs at the sensors and the BS, respectively.
Thus,
\begin{align*}
	\mathbf{a}_{\mathrm{R}}(\phi_{k,i})
	 &
	\!=\!
	\frac{1}{\sqrt{N_{\mathrm{R}}}}
	\left[1\!,e^{-j\pi\sin(\phi_{k,i})}\!,\ldots\!,e^{-j\pi(N_{\mathrm{R}}\!-\!1)\sin(\phi_{k,i})}\right]^{\mathsf{T}},
	\\
	\mathbf{a}_{\mathrm{T}}(\theta_{k,i})
	 &
	\!=\!
	\frac{1}{\sqrt{N_{\mathrm{T}}}}
	\left[1\!,e^{-j\pi\sin(\theta_{k,i})}\!,\ldots\!,e^{-j\pi(N_{\mathrm{T}}\!-\!1)\sin(\theta_{k,i})}\right]^{\mathsf{T}}.
\end{align*}

The human blocker is modeled as a disk with radius $r_{\mathrm{B}}$, and hence, the indicator $B_{t,k,i}$ can be determined by comparing the blocker radius with the shortest distances between the blocker's centroid and the $i$-th path.
Let $\mathcal{P}_{k,i}$ be the set of line segments in the $(k,i)$-th path, and $\mathbb{D}(\ell,\mathcal{P}_{k,i})$ be the minimum distance from the disk centroid to the line segments of the $(k,i)$-th path, then $
	B_{t,k,i}(l_{t}^{\mathrm{B}})
	\!=\!
	\mathbb{I}[\mathbb{D}(l_{t}^{\mathrm{B}},\mathcal{P}_{k,i})
		\!\geq\!
		r_{\mathrm{B}}]
$.
The path loss of the LoS path is usually much smaller than that of the NLoS paths.
Therefore, the uplink transmission suffers from significant degradation when the LoS path is blocked.

Let $\mathbf{w}_{t,k}\!\in\!\mathbb{C}^{N_{\mathrm{R}}\!\times\!1}$ and $\mathbf{f}_{t,k}\!\in\!\mathbb{C}^{N_{\mathrm{T}}\!\times\!1}$ be the analog combiner and precoder of the BS and the $k$-th sensor when the $k$-th sensor is transmitting in the $t$-th frame, respectively.
The uplink capacity of the $k$-th sensor in the $t$-th frame can be expressed as
\begin{align}
	\label{eqn:R_tk}
	\vspace{-0.2cm}
	R_{t,k}
	\triangleq
	W\log_{2}\Bigg(1+p_{t,k}\underbrace{\frac{\big|\mathbf{w}_{t,k}^{\mathsf{H}}\mathbf{H}_{t,k}\mathbf{f}_{t,k}\big|^{2}}{\|\mathbf{w}_{t,k}\|^{2}N_{0}W}}_{\textrm{Baseband gain $Y_{t,k}$}}\Bigg),
	\vspace{-0.2cm}
\end{align}
where $
	Y_{t,k}(l_{t}^{\mathrm{B}})
	\triangleq
	\frac{\left|\mathbf{w}_{t,k}^{\mathsf{H}}\mathbf{H}_{t,k}\mathbf{f}_{t,k}\right|^{2}}
	{\|\mathbf{w}_{t,k}\|^{2}N_{0}W}
$ denotes the baseband channel power gain, $p_{t,k}$ denotes the transmission power of the $k$-th sensor, $N_{0}$ is the noise power spectral density, $W$ is the bandwidth.
Due to the hardware constraint, the maximum transmission power satisfies
\begin{equation}
	\label{eqn:power_constraint}
	p_{t,k}\leq P_{\mathrm{max}},\ \forall{t},{k}\in\mathcal{K}.
\end{equation}

The analog precoders and combiners are chosen from a pre-defined codebook composed of a finite number of beam directions, as follows,
\begin{align}
	\mathbf{w}_{t,k}\in\mathcal{W}
	\triangleq
	\{\mathbf{a}_{\mathrm{R}}(\phi_{q}),q=1,2,\ldots,N_{\mathrm{R}}\}, \\ \mathbf{f}_{t,k}\in\mathcal{F}\triangleq\{\mathbf{a}_{\mathrm{T}}(\phi_{p}),p=1,2,\ldots,N_{\mathrm{T}}\},
\end{align}
where $
	\phi_{q}
	\!=\!
	\arcsin\Big(\frac{2(q-1)}{N_{\mathrm{R}}}\!-\!1\Big)
$, and $
	\theta_{p}
	\!=\!
	\arcsin\Big(\frac{2(p-1)}{N_{\mathrm{T}}}\!-\!1\Big)
$.
In order to avoid the costly channel matrix estimation in every frame, the transmission and receiving beams are adapted to maximize the average signal-to-noise ratio (SNR) instead of instantaneous SNR, as follows,
\begin{align}
	\label{eqn:beamforming_1}
	\left(\mathbf{w}_{t,k},\mathbf{f}_{t,k}\right)
	=
	\mathop{\arg\max}_{{}
	^{\mathbf{a}_{\mathrm{R}}(\phi_{q})\in\mathcal{W}}
	_{\mathbf{a}_{\mathrm{T}}(\phi_{p})\in\mathcal{F}}}
	\mathbb{E}_{\mathbf{H_{t,k}}}
	\big|\mathbf{a}_{\mathrm{R}}^{\mathsf{H}}(\phi_{q})\mathbf{H}_{t,k}\mathbf{a}_{\mathrm{T}}(\theta_{p})\big|^{2}.
\end{align}
Since the AoAs and AoDs of the paths in \eqref{eqn:H_tk} are static, they can be estimated before the scheduling, as in \cite{JADE,E-Mi}.
The variances of path gains $\{\alpha_{k,i},\forall{k,i}\}$, can also be measured before the transmission \cite{E-Mi}.
Given the location of the human blocker by real-time localization \cite{10005216}, the distributions of channel matrices $\{\mathbf{H}_{t,k},\forall{t,k}\}$ can be predicted.
Hence, the optimization in \eqref{eqn:beamforming_1} is feasible even if $\mathbf{H}_{t,k}$ is unknown.
Intuitively, \eqref{eqn:beamforming_1} would align the transmission and receiving beams along the path with minimum path loss.

Time-Division Multiple Access (TDMA) is adopted in each frame.
Let $\tau_{t,k}$ be the transmission time allocated to the $k$-th sensor in the $t$-th frame; the following constraints should be satisfied,
\begin{align}
	\label{eqn:total_time_constraint}
	 &
	\textstyle{
	\sum_{k\in\mathcal{K}}\tau_{t,k}=T_{\mathrm{F}},\ \forall{t},
	}
	\\
	\label{eqn:time_constraint}
	 &
	0\leq\tau_{t,k}\leq T_{\mathrm{F}},\ \forall{t,k}\in\mathcal{K}.
\end{align}
Moreover, the throughput of the $k$-th sensor in the $t$-th frame is $\tau_{t,k}R_{t,k}$.

\subsection{AoI Model for Sample Uploading}
\label{subsec:AoI_model}
If the server delivers a sensing decision to the $k$-th sensor at the beginning of the $t$-th frame, a sensing sample representing the latest target status is generated by the sensor.
Since the sensing decision is usually short, we ignore the downlink signalling overhead.
It is assumed that the data volume of each sample generated by the $k$-th sensor consists of $L_{k}$ packets, each with $N_{\mathrm{b}}$ information bits.
Let $Q_{t,k}$ be the length of the uplink transmission queue (number of uplink packets) of the $k$-th sensor in the $t$-th frame, where the sensing sample is buffered.
When new data sample is generated at one sensor, the existing packets from the previous sampling in its uplink queue will be dropped, and new packets will be added.
The sampling action will induce a constant sampling energy cost which is denoted by $C^{\mathrm{s}}$.
Let $s_{t,k}\!\in\!\{0,1\}$ be the sampling action for the $k$-th sensor in the $t$-th frame, where $s_{t,k}\!=\!0$ indicates that the $k$-th sensor will continue transmitting data packets of its current sample in the $t$-th frame, and $s_{t,k}\!=\!1$ indicates that the $k$-th sensor will start transmitting the packets of the new sample.
Thus,
\begin{align}
	Q_{t+1,k}=
	\begin{cases}
		(Q_{t,k}-D_{t,k})^{+}, & s_{t,k}=0 \\
		L_{k}-D_{t,k},         & s_{t,k}=1 \\
	\end{cases}
\end{align}
where $D_{t,k}$ denotes the departure packet number of the $k$-th sensor in the $t$-th frame.
Thus, $
	D_{t,k}
	=
	\big\lfloor\frac{R_{t,k}\tau_{t,k}}{N_{\mathrm{b}}}\big\rfloor
$.

To characterize the freshness of samples, the AoI at the $k$-th sensor, denoted by $A_{t,k}^{\mathrm{s}}$, is defined as the number of frames since the generation time of the latest sample.
Moreover, since samples with extremely large AoIs will bring no benefit to the server due to its outdatedness, we set an AoI threshold $A_{\mathrm{max}}$ to characterize outdated samples.
The AoI dynamics at the $k$-th sensor is given by
\begin{align}
	A_{t+1,k}^{\mathrm{s}}=
	\begin{cases}
		\min\{A_{t,k}^{\mathrm{s}}+1,A_{\mathrm{max}}\}, & s_{t,k}=0 \\
		1,                                               & s_{t,k}=1
	\end{cases}
\end{align}

Besides the AoIs at the sensors, the server records AoIs of the latest uploaded samples of all sensors.
Once the transmission of all packets at the $k$-th sensor is accomplished, the AoI at the server for the $k$-th sensor will be updated by the AoI at the $k$-th sensor in next frame.
Hence, let $A_{t,k}^{\mathrm{d}}$ be the AoI for the $k$-th sensor at the server in the $t$-th frame, we have
\begin{align}
	A_{t+1,k}^{\mathrm{d}}
	\!=\!
	\begin{cases}
		\min\{A_{t,k}^{\mathrm{s}}\!+\!1,A_{\mathrm{max}}\}\!,
		 &
		(Q_{t,k}\!-\!D_{t,k})^{+}\!=\!0 \\
		\min\{A_{t,k}^{\mathrm{d}}\!+\!1,A_{\mathrm{max}}\}\!,
		 &
		\mathrm{otherwise}
	\end{cases}
\end{align}

\section{Problem Formulation}
\label{sec:problem_formulation}
The average AoI at the server depends on the sampling action, time and uplink power allocations of all the frames.
Clearly, due to the random motion of the human blocker and random channel matrices, it is impossible for the BS to determine the sampling action, time and uplink power allocations of all the frames in a deterministic manner.
Instead, we shall formulate their optimization as an infinite-horizon MDP.
The system state, scheduling policy, and system cost are first defined below.
\begin{Definition}[System State]
	At the beginning of the $t$-th frame, the global system state is uniquely specified by a tuple $\mathcal{S}_{t}\!\triangleq\!(l_{t}^{\mathrm{B}},\mathcal{Y}_{t},\mathcal{Q}_{t},\mathcal{A}_{t}^{\mathrm{s}},\mathcal{A}_{t}^{\mathrm{d}})$, where $\mathcal{Y}_{t}\!\triangleq\!\{Y_{t,k}\}_{k\in\mathcal{K}}$ is the set of baseband channel power gains of all sensors, $\mathcal{Q}_{t}\!\triangleq\!\{Q_{t,k}\}_{k\in\mathcal{K}}$ is the set of remaining numbers of packets at the uplink queues, $\mathcal{A}_{t}^{\mathrm{s}}\!\triangleq\!\{A_{t,k}^{\mathrm{s}}\}_{k\in\mathcal{K}}$ is the set of AoIs at the sensors, and $\mathcal{A}_{t}^{\mathrm{d}}\!\triangleq\!\{A_{t,k}^{\mathrm{d}}\}_{k\in\mathcal{K}}$ is the set of AoIs at the server.
	Moreover, the local system state of the $k$-th sensor in the $t$-th frame is defined by $\mathcal{S}_{t,k}\!\triangleq\!(l_{t}^{\mathrm{B}},Y_{t,k},Q_{t,k},A_{t,k}^{\mathrm{s}},A_{t,k}^{\mathrm{d}})$.
\end{Definition}

\begin{Definition}[Action and Policy]
	The local scheduling action of the $k$-th sensor is defined as $\mathbf{a}_{t,k}\!\triangleq\!(s_{t,k},\tau_{t,k},p_{t,k})$, including the sampling decision, and uplink transmission time and power.
	The global scheduling action is defined as the aggregation of the local actions of all the sensors; thus, $\mathbf{a}_{t}\triangleq\{\mathbf{a}_{t,k}\}_{k\in\mathcal{K}}$.
	Hence, the scheduling policy, denoted as $\Omega$, is a mapping from the system state $\mathcal{S}_{t}$ to the scheduling actions, i.e., $\Omega(\mathcal{S}_{t})=\mathbf{a}_{t}$.
\end{Definition}

Given the scheduling policy $\Omega$, the state transition probability can be written as in \eqref{eqn:Pr}.

\begin{figure*}
	\begin{align}
		\label{eqn:Pr}
		 & \Pr\big[\mathcal{S}_{t\!+\!1}\big|\mathcal{S}_{t},\Omega(\mathcal{S}_{t})\big]
		\!=\!
		\Pr\big[l_{t\!+\!1}^{\mathrm{B}}\big|l_{t}^{\mathrm{B}}\big]\Pr\big[\mathcal{Y}_{t\!+\!1}\big|l_{t\!+\!1}^{\mathrm{B}}\big]
		\times
		\textstyle{\prod_{k\in\mathcal{K}}}\big\{(1\!-\!s_{t\!,k})\mathbb{I}\big[A_{t\!+\!1\!,k}^{\mathrm{s}}\!=\!\min\{A_{t\!,k}^{\mathrm{s}}\!+\!1,A_{\mathrm{max}}\}\big]\!+\!s_{t\!,k}\mathbb{I}\big[A_{t\!+\!1\!,k}\!=\!1\big]\big\}
		\nonumber
		\\
		\!\times\!
		 & \textstyle{\prod_{k\in\mathcal{K}}}\big\{(1\!-\!s_{t\!,k})\mathbb{I}\big[Q_{t\!+\!1\!,k}\!=\!(Q_{t\!,k}\!-\!D_{t\!,k})^{+}\big|Y_{t\!,k},\tau_{t\!,k},p_{t\!,k}\big]
		\!+\!s_{t\!,k}\mathbb{I}\big[Q_{t\!+\!1\!,k}\!=\!L_{k}\!-\!D_{t\!,k}\big|Y_{t\!,k},\tau_{t\!,k},p_{t\!,k}\big]\big\} \nonumber
		\\
		\!\times\!
		 & \textstyle{\prod_{k\in\mathcal{K}}}\big\{\mathbb{I}\big[(Q_{t\!,k}\!-\!D_{t\!,k})^{+}\!=\!0\big]\mathbb{I}\big[A_{t\!+\!1\!,k}^{\mathrm{d}}\!=\!\min\{A_{t\!,k}^{\mathrm{s}}\!+\!1,A_{\mathrm{max}}\}\big]
		\!+\!\mathbb{I}\big[(Q_{t\!,k}\!-\!D_{t\!,k})^{+}\!\neq\!0\big]\mathbb{I}\big[A_{t\!+\!1\!,k}^{\mathrm{d}}\!=\!\min\{A_{t\!,k}^{\mathrm{d}}\!+\!1,A_{\mathrm{max}}\}\big]\big\}.
	\end{align}
	\hrulefill
	\vspace{-0.5cm}
\end{figure*}

In the $t$-th frame, the per-frame cost is defined as the weighted sum of AoIs at the server, energy consumption for sampling and uplink transmission, and outdated AoI penalties at the server for all sensors.
That is,
\begin{align}
	\label{eqn:cost_function}
	g(\mathcal{S}_{t},\Omega(\mathcal{S}_{t}))
	=
	\sum\nolimits_{k\in\mathcal{K}}\big[A_{t,k}^{\mathrm{d}}
	 & +w_{\mathrm{P}}(s_{t,k}C^{\mathrm{s}}+\tau_{t,k}p_{t,k}) \nonumber     \\
	 & +w_{\mathrm{Q}}\mathbb{I}[A_{t,k}^{\mathrm{d}}=A_{\mathrm{max}}]\big],
\end{align}
where $w_{\mathrm{P}}$ and $w_{\mathrm{Q}}$ denote the weights for energy consumption and AoI outdatedness penalty, respectively.
Note that the first and third terms of \eqref{eqn:cost_function} imply a nonlinear cost function for AoI \cite{8764465}.

Hence, the overall cost from the $1$-st frame is defined as
\begin{align}
	\overline{G}(\mathcal{S}_{1},\Omega)
	\triangleq
	\lim_{T\rightarrow\infty}\left[\mathbb{E}_{\mathcal{Y},\mathcal{L}^{\mathrm{B}}}\sum_{t=1}^{T}\gamma^{t\!-\!1}g_{t}(\mathcal{S}_{t},\Omega(\mathcal{S}_{t}))\Big|\mathcal{S}_{1}\right],
\end{align}
where $\gamma\!\in\!\left(0,1\right)$ is the discount factor, and the expectation is taken on $\mathcal{Y}\!\triangleq\!\{\mathcal{Y}_{1}\!,\mathcal{Y}_{2}\!,\ldots\!,\mathcal{Y}_{T}\}$ and $\mathcal{L}^{\mathrm{B}}\!\triangleq\!\{l_{1}^{\mathrm{B}}\!,l_{2}^{\mathrm{B}}\!,\ldots\!,l_{T}^{\mathrm{B}}\}$.
As a result, the joint sampling and uploading optimization can be formulated by the following infinite-horizon MDP with discounted cost.
\begin{align*}
	\mathsf{P1}:\
	\Omega^{\star}
	=
	\textstyle{
		\mathop{\arg\min}_{\Omega}\
	}
	 & \overline{G}(\mathcal{S}_{1},\Omega) \nonumber                                                                        \\
	\mathrm{s.t.}\
	 & \textrm{Constraints in \eqref{eqn:power_constraint}, \eqref{eqn:total_time_constraint}, \eqref{eqn:time_constraint}}.
\end{align*}
The Bellman's equations of the above MDP are
\begin{align}
	\label{eqn:bellman}
	W(\mathcal{S})
	=
	 &
	\textstyle{
	\min_{\Omega(\mathcal{S})}\Big[g\big(\mathcal{S},\Omega(\mathcal{S})\big)
	}
	\nonumber
	\\
	 &
	+
	\textstyle{
		\gamma\sum_{\mathcal{S}^{\prime}}W(\mathcal{S}^{\prime})\Pr[\mathcal{S}^{\prime}|\mathcal{S},\Omega(\mathcal{S})]\Big],\
	}
	\forall\mathcal{S},
\end{align}
where $W(\cdot)$ is the value function of the optimal scheduling policy (i.e., the optimal value function), and $\mathcal{S}^{\prime}$ is the system state in the next frame given system state $\mathcal{S}$ and scheduling action $\Omega(\mathcal{S})$.
Moreover, the policy minimizing the right-hand-side (RHS) of the above Bellman's equations is proved to be the optimal one \cite{bertsekas2012dynamic}.

Note that the baseband gain in the system state is continuously and independently distributed in all frames, which can be eliminated from the value function to reduce the complexity.
Hence, we first define the local and global \textit{abstract state}\cite{li2006towards} with the baseband gain eliminated, i.e., $
	\widetilde{\mathcal{S}}_{t,k}\!\triangleq\!(l_{t}^{\mathrm{B}},Q_{t,k},A_{t,k}^{\mathrm{s}},A_{t,k}^{\mathrm{d}})
$ and $
	\widetilde{\mathcal{S}}_{t}\!\triangleq\!(l_{t}^{\mathrm{B}},\mathcal{Q}_{t},\mathcal{A}_{t}^{\mathrm{s}},\mathcal{A}_{t}^{\mathrm{d}})
$.
By taking expectation on both sides of \eqref{eqn:bellman}, the Bellman's equations with respect to the abstract state can be simplified as
\begin{align}
	\label{eqn:refined_bellman}
	W(\widetilde{\mathcal{S}})
	=
	 &
	\textstyle{
	\mathbb{E}_{\mathcal{Y}}\min_{\Omega(\mathcal{S})}\Big[g\big(\mathcal{S},\Omega(\mathcal{S})\big)
	}
	\nonumber
	\\
	 &
	+
	\textstyle{
		\gamma\sum_{\widetilde{\mathcal{S}}^{\prime}}W(\widetilde{\mathcal{S}}^{\prime})\Pr\big[\widetilde{\mathcal{S}}^{\prime}|\mathcal{S},\Omega(\mathcal{S})\big]\Big],
	}
	\ \forall\widetilde{\mathcal{S}},
\end{align}
where $\widetilde{\mathcal{S}}^{\prime}$ is the abstract state in the next frame.
The optimal value function with respect to the abstract state can be represented as follows,
\begin{align}
	\label{eqn:value_function_definition}
	 &
	W(\widetilde{\mathcal{S}})
	=
	\mathbb{E}_{\mathcal{Y}}W(\mathcal{S})
	=
	\min_{\Omega}\mathbb{E}_{\mathcal{Y},\mathcal{L}^{\mathrm{B}}}^{\Omega}
	\nonumber
	\\
	 &
	\lim_{T\rightarrow\infty}\sum\nolimits_{t=1}^{T}\gamma^{t-1}\Big[g\big(\widetilde{\mathcal{S}}_{t},\Omega(\widetilde{\mathcal{S}}_{t})\big)\Big|\widetilde{\mathcal{S}}_{1}=\widetilde{\mathcal{S}}\Big],
	\ \forall\widetilde{\mathcal{S}}.
\end{align}

With the value function of the optimal policy for an arbitrary abstract state $\{W(\widetilde{\mathcal{S}})|\forall\widetilde{\mathcal{S}}\}$ and the current system state $\mathcal{S}_{t}$, the optimal action in the $t$-th frame is given by
\begin{align}
	\label{eqn:one-step_iteration}
	\Omega^{\star}
	 & (\mathcal{S}_{t})
	=
	\mathop{\arg\min}_{\Omega(\mathcal{S}_{t})}\Big[g_{t}\Big(\mathcal{S}_{t},\Omega(\mathcal{S}_{t})\Big)
		\nonumber
	\\
	 &
		+
		\gamma\sum\nolimits_{\widetilde{\mathcal{S}}_{t+1}}W(\widetilde{\mathcal{S}}_{t+1})\Pr[\widetilde{\mathcal{S}}_{t+1}|\mathcal{S}_{t},\Omega(\mathcal{S}_{t})]\Big],
	\ \forall\mathcal{S}_{t}.
\end{align}

Although conventional approaches such as policy and value iteration can be used to find the optimal scheduling policy \cite{bertsekas2012dynamic}, they suffer from the \textit{curse of dimensionality}: due to the huge system state space, the evaluation of $\{W(\widetilde{\mathcal{S}}_{t})|\forall \widetilde{\mathcal{S}}_{t}\}$ is prohibitive.
In the following section, we shall propose a low-complexity scheduling scheme to address this issue.

\section{Low-Complexity Scheduling}
\label{sec:low-complexity}

In this section, a low-complexity scheduling scheme with analytical performance bound is proposed.
Here is a sketch of the proposed scheme:
\begin{itemize}[leftmargin=10pt]
	\item \textbf{Section \ref{subsec:reference_policy}:} We first design a heuristic scheduling policy as the reference policy, whose value function (average discounted cost) can be analytically expressed with mobile blocker detection.
	\item \textbf{Section \ref{subsec:one-step_iteration}:} Approximating the optimal value function via the above derived value function, the scheduling action of each frame can be obtained by solving the optimization problem in \eqref{eqn:one-step_iteration}.
	\item \textbf{Section \ref{subsec:alternative}:} We decouple the problem and then propose an alternative optimization algorithm to derive the suboptimal solution.
	\item \textbf{Section \ref{subsec:performance_bound}:} We analyze the performance bound and time complexity of the proposed suboptimal solution.
\end{itemize}

\subsection{Decoupled Reference Policy}
\label{subsec:reference_policy}
A heuristic reference policy is proposed to provide an expression for an achievable average discounted cost.
In fact, given the policy, the system evolves as a Markov chain, whose cost can be derived via the transition matrix.
However, the dimension of the transition matrix grows exponentially with respect to the number of sensors, which makes the above approach infeasible.
In order to address this issue, in the reference policy, the transmission time allocation of the $K$ sensors is fixed.
Hence, the state transition of the sensors can be decoupled into $K$ Markov chains with significantly smaller state space.
We adopt the following decoupled reference policy (which we refer to as reference policy); however, the design of reference policy is not unique.
\begin{Policy}[Decoupled Reference Policy $\Pi$]
	The reference policy, denoted as $\Pi\!\triangleq\!(s_{t,k}^{\Pi},\tau_{t,k}^{\Pi},p_{t,k}^{\Pi})$, is elaborated below.
	\begin{itemize}[leftmargin=10pt]
		\item
		      The sampling decision is made when the uplink queue is empty, i.e., $s_{t,k}^{\Pi}\!=\!\mathbb{I}[Q_{t,k}\!=\!0]$.
		\item
		      The transmission time allocation is proportional to the data volume of corresponding sampling, i.e., $\tau_{t,k}^{\Pi}\!=\!T_{\mathrm{F}}L_{k}/\sum_{k^{\prime}\in\mathcal{K}}L_{k^{\prime}}$.
		\item
		      The transmission power is constant, i.e., $p_{t,k}^{\Pi}\!=\!P^{\Pi}$.
	\end{itemize}
\end{Policy}

To derive the average discounted cost for each sensor, we first express the transition matrices of the local abstract state of the sensors.
Specifically, let $
	\mathbf{s}_{t\!,k}\!\in\!\mathcal{R}^{|\mathbb{L}|(L_{k}\!+\!1)A_{\mathrm{max}}^{2}\!\times\!1
	}$ be the vector representing the probabilities of all local abstract states of the $k$-th sensor, where the $\kappa(l_{t}^{\mathrm{B}},\epsilon(Q_{t\!,k},A_{t\!,k}^{\mathrm{s}},A_{t\!,k}^{\mathrm{d}}))$-th entry denotes the probability of the local abstract state $\widetilde{\mathcal{S}}_{t\!,k}\!=\!(l_{t}^{\mathrm{B}},Q_{t\!,k},A_{t\!,k}^{\mathrm{s}},A_{t\!,k}^{\mathrm{d}})$, and $
	\kappa(l_{t}^{\mathrm{B}},\epsilon(Q_{t\!,k},A_{t\!,k}^{\mathrm{s}},A_{t\!,k}^{\mathrm{d}}))
	\!\triangleq\!
	(l_{t}^{\mathrm{B}}\!-\!1)(L_{k}\!+\!1)A_{\mathrm{max}}^{2}\!+\!\epsilon(Q_{t\!,k},A_{t\!,k}^{\mathrm{s}},A_{t\!,k}^{\mathrm{d}})
$ and $
	\epsilon(Q_{t\!,k},A_{t\!,k}^{\mathrm{s}},A_{t\!,k}^{\mathrm{d}})
	\!\triangleq\!
	Q_{t\!,k}A_{\mathrm{max}}^{2}\!+\!(A_{t\!,k}^{\mathrm{s}}\!-\!1)A_{\mathrm{max}}\!+\!A_{t\!,k}^{\mathrm{d}}
$ are indexes.
Let $
	\mathbf{P}_{k}
	\!\in\!
	\mathbb{R}^{|\mathbb{L}|(L_{k}\!+\!1)A_{\mathrm{max}}^{2}\!\times\!|\mathbb{L}|(L_{k}\!+\!1)A_{\mathrm{max}}^{2}}
$ be the transition probability matrix of the local abstract state of the $k$-th sensor, we have $\mathbf{s}_{t+1\!,k}\!=\!\mathbf{P}_{k}^{\mathsf{T}}\mathbf{s}_{t\!,k}$.
In order to derive the expression of $\mathbf{P}_{k}$, we first introduce the following lemma on the distribution of departure packet number.
\begin{Lemma}
	\label{lem:PMF_D}
	With sufficiently large $N_{\mathrm{R}}$ and $N_{\mathrm{T}}$, given the reference policy $\Pi$, the precoder and combiner in \eqref{eqn:beamforming_1}, and the blocker's location $l_{t}^{\mathrm{B}}$, the probability mass function (PMF) of departure packet number can be written by \eqref{eqn:PMF_D}, where $i^{\star}\!=\!\mathop{\arg\max}_{i}B_{t,k,i}(l_{t}^{\mathrm{B}})\rho_{k,i}^{-1}$.
\end{Lemma}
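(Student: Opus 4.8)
The plan is to collapse the vector/matrix channel into a single scalar, exponentially distributed gain and then push that scalar law through the rate expression \eqref{eqn:R_tk} and the floor in $D_{t,k}=\lfloor R_{t,k}\tau_{t,k}/N_{\mathrm b}\rfloor$. First I would exploit the asymptotic orthogonality of the half-wavelength ULA steering vectors: as $N_{\mathrm R},N_{\mathrm T}\!\to\!\infty$, the normalized correlations $\mathbf{a}_{\mathrm R}^{\mathsf H}(\phi_q)\mathbf{a}_{\mathrm R}(\phi_{k,i})$ and $\mathbf{a}_{\mathrm T}^{\mathsf H}(\theta_{k,i})\mathbf{a}_{\mathrm T}(\theta_p)$ are Dirichlet kernels that tend to $1$ when the angles coincide and to $0$ otherwise, while the arcsine-spaced codebooks $\mathcal W,\mathcal F$ become dense enough to contain a beam aligned with each path direction. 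Substituting \eqref{eqn:H_tk} into the objective of \eqref{eqn:beamforming_1} and using that the $\alpha_{t,k,i}\!\sim\!\mathcal{CN}(0,\rho_{k,i}^{-1})$ are zero-mean and independent across paths, all cross terms vanish in expectation, leaving
\[
\mathbb{E}_{\mathbf{H}_{t,k}}\big|\mathbf{a}_{\mathrm R}^{\mathsf H}(\phi_q)\mathbf{H}_{t,k}\mathbf{a}_{\mathrm T}(\theta_p)\big|^{2}=\sum_{i\in\mathcal M}B_{t,k,i}(l_{t}^{\mathrm B})\rho_{k,i}^{-1}\big|\mathbf{a}_{\mathrm R}^{\mathsf H}(\phi_q)\mathbf{a}_{\mathrm R}(\phi_{k,i})\big|^{2}\big|\mathbf{a}_{\mathrm T}^{\mathsf H}(\theta_{k,i})\mathbf{a}_{\mathrm T}(\theta_p)\big|^{2}.
\]
In the large-array limit each term peaks on the codebook pair aligned to its own path and is otherwise negligible, so the maximizer is the beam pair aligned with the unblocked path of smallest path loss, namely $i^{\star}=\arg\max_i B_{t,k,i}(l_t^{\mathrm B})\rho_{k,i}^{-1}$; this pins down $(\mathbf{w}_{t,k},\mathbf{f}_{t,k})=(\mathbf{a}_{\mathrm R}(\phi_{k,i^{\star}}),\mathbf{a}_{\mathrm T}(\theta_{k,i^{\star}}))$.

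Given this alignment, orthogonality annihilates every path except $i^{\star}$ in the numerator of \eqref{eqn:R_tk}, so with $B_{t,k,i^{\star}}\!=\!1$ and $\|\mathbf{w}_{t,k}\|^{2}\!=\!1$ we get $\big|\mathbf{w}_{t,k}^{\mathsf H}\mathbf{H}_{t,k}\mathbf{f}_{t,k}\big|^{2}=|\alpha_{t,k,i^{\star}}|^{2}$, and the baseband gain reduces to the scalar $Y_{t,k}=|\alpha_{t,k,i^{\star}}|^{2}/(N_0 W)$. Since $\alpha_{t,k,i^{\star}}\sim\mathcal{CN}(0,\rho_{k,i^{\star}}^{-1})$, its squared magnitude is exponential, hence $Y_{t,k}$ is exponentially distributed with CDF $F_Y(y)=1-\exp(-\rho_{k,i^{\star}}N_0 W y)$ for $y\ge 0$. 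This is the key reduction that makes an analytic PMF possible.

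Finally I would transform this law through the deterministic rate-and-floor map. Under the reference policy $\Pi$ the time $\tau_{t,k}^{\Pi}$ and power $P^{\Pi}$ are fixed, so $D_{t,k}=\lfloor \tau_{t,k}^{\Pi}W\log_2(1+P^{\Pi}Y_{t,k})/N_{\mathrm b}\rfloor$ is a nondecreasing deterministic function of $Y_{t,k}$; inverting it shows $\{D_{t,k}=d\}$ is equivalent to $Y_{t,k}\in[y_d,y_{d+1})$ with $y_d=(2^{dN_{\mathrm b}/(\tau_{t,k}^{\Pi}W)}-1)/P^{\Pi}$. Applying the exponential CDF gives $\Pr[D_{t,k}=d]=F_Y(y_{d+1})-F_Y(y_d)$, a difference of two exponentials matching \eqref{eqn:PMF_D}, with the residual probability assigned to the largest attainable value (and any queue truncation against $Q_{t,k}$) at the boundary. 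The hard part will be the first step: making the orthogonality-and-alignment argument precise for finite but large arrays, bounding the codebook beam-quantization error and verifying that the discarded inter-path and inter-beam leakage terms are $o(1)$ so the gain is genuinely single-path. I would control this through the Dirichlet-kernel decay of the steering-vector correlation together with the density of the arcsine-spaced codebook, so that both the beam selection and the scalar collapse hold in the stated large-$N_{\mathrm R},N_{\mathrm T}$ regime.
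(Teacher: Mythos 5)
Your proposal is correct and follows essentially the same route as the paper's own proof: large-array orthogonality of the steering vectors collapses the beamforming objective to selecting the unblocked path with minimum path loss $i^{\star}$, the baseband gain then reduces to $|\alpha_{t,k,i^{\star}}|^{2}/(N_{0}W)\sim\mathrm{Exp}(\rho_{k,i^{\star}}N_{0}W)$, and inverting the floor of the rate map yields the PMF as a difference of the exponential CDF at the thresholds $y_{d}=(2^{dN_{\mathrm{b}}/(W\tau_{t,k}^{\Pi})}-1)/P^{\Pi}$. Your write-up is, if anything, slightly more careful than the paper's (explicitly noting that cross-path terms vanish by zero-mean independence and that the event $\{D_{t,k}=d\}$ corresponds to the half-open interval $[y_{d},y_{d+1})$), so no gap to report.
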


\begin{proof}
	Please refer to Appendix A. 
\end{proof}

\addtolength{\textheight}{-0.1in}
As a result, we have the following lemma on the transition matrix $\mathbf{P}_{k}$.
\begin{Lemma}
	\label{lem:P_k}
	With sufficiently large $N_{\mathrm{R}}$ and $N_{\mathrm{T}}$, given the reference policy $\Pi$, the precoder and combiner in \eqref{eqn:beamforming_1}, and the blocker's location $l_{t}^{\mathrm{B}}$, the transition probability matrix of local abstract state of the $k$-th sensor is given by \eqref{eqn:P_k}, where $\mathbf{M}_{k}^{(\ell)}\in\mathbb{R}^{(L_{k}+1)A_{\mathrm{max}}^{2}\times(L_{k}+1)A_{\mathrm{max}}^{2}}$ is given by Table \ref{tab:M_kl}, and $\mathbf{P}^{\mathrm{B}}$ is defined in \eqref{eqn:P_B}.
\end{Lemma}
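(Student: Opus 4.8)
The plan is to build $\mathbf{P}_k$ from two ingredients that evolve almost independently: the blocker's Markov motion, governed by $\mathbf{P}^{\mathrm{B}}$ in \eqref{eqn:P_B}, and the queue/AoI triple $(Q_{t,k},A_{t,k}^{\mathrm{s}},A_{t,k}^{\mathrm{d}})$, whose evolution under the reference policy $\Pi$ is driven solely by the random departure count $D_{t,k}$. The first observation I would establish is that, under $\Pi$, the sampling decision $s_{t,k}^{\Pi}=\mathbb{I}[Q_{t,k}=0]$, the transmission time $\tau_{t,k}^{\Pi}$, and the power $p_{t,k}^{\Pi}$ are all \emph{deterministic} functions of the current local abstract state. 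Consequently, once the blocker location $l_t^{\mathrm{B}}=\ell$ is fixed, the entire next local abstract state is a deterministic function of $(Q_{t,k},A_{t,k}^{\mathrm{s}},A_{t,k}^{\mathrm{d}})$ and the single random quantity $D_{t,k}$, whose conditional PMF is exactly the expression \eqref{eqn:PMF_D} supplied by Lemma \ref{lem:PMF_D}.

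Next I would assemble the inner block $\mathbf{M}_k^{(\ell)}$, which collects the transition probabilities of the index $\epsilon(Q,A^{\mathrm{s}},A^{\mathrm{d}})$ conditioned on $l_t^{\mathrm{B}}=\ell$. For each source index I would read the action off $\Pi$ and then apply the deterministic recursions: the queue update $(Q-D)^+$ or $L_k-D$ depending on whether $Q=0$, the sensor-AoI recursion $\min\{A^{\mathrm{s}}+1,A_{\max}\}$ versus the reset to $1$, and the server-AoI recursion selecting between $\min\{A^{\mathrm{s}}+1,A_{\max}\}$ and $\min\{A^{\mathrm{d}}+1,A_{\max}\}$ according to whether the current sample finishes, i.e. whether $(Q-D)^+=0$. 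Summing the PMF mass of \eqref{eqn:PMF_D} over all realizations of $D$ that carry a given source to a given destination yields each entry of $\mathbf{M}_k^{(\ell)}$; the resulting tabulation of cases is precisely Table \ref{tab:M_kl}.

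Finally I would stitch the two pieces together. Because the blocker chain is autonomous (its transition in \eqref{eqn:P_B} does not depend on the queue/AoI state) and because the departure PMF depends on the next state only through the current location $\ell$ via $i^{\star}$, the joint one-step law factorizes: the probability of moving from $\kappa(\ell,\epsilon)$ to $\kappa(\ell',\epsilon')$ equals $[\mathbf{P}^{\mathrm{B}}]_{\ell,\ell'}[\mathbf{M}_k^{(\ell)}]_{\epsilon,\epsilon'}$. Written block-wise over the outer index $\ell$ under the ordering fixed by $\kappa$ and $\epsilon$, this is exactly the claimed form \eqref{eqn:P_k}, equivalently the product of the block-diagonal matrix $\mathrm{diag}(\mathbf{M}_k^{(1)},\ldots,\mathbf{M}_k^{(|\mathbb{L}|)})$ with $\mathbf{P}^{\mathrm{B}}\otimes\mathbf{I}$.

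The hard part is not the factorization but the exhaustive, boundary-correct case analysis feeding $\mathbf{M}_k^{(\ell)}$: one must handle the saturation at $A_{\max}$, the positive-part truncations, the aggregation of all $D\geq Q$ into a single ``queue-cleared'' event, and the special treatment of the sampling frame $Q=0$, so that the entries match Table \ref{tab:M_kl} without double counting. I would also verify that no probability escapes the finite set $\{0,\ldots,L_k\}\times\{1,\ldots,A_{\max}\}^2$ for the queue/AoI part, which holds because $\Pi$ caps a fresh queue at $L_k$ and both AoIs are clipped at $A_{\max}$.
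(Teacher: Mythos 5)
Your proposal is correct and takes essentially the same route as the paper's proof: condition on the current blocker location, use the departure PMF of Lemma \ref{lem:PMF_D} to fill in the queue/AoI block $\mathbf{M}_{k}^{(\ell)}$ through the four-case analysis of Table \ref{tab:M_kl}, and factor the joint one-step law as $[\mathbf{P}^{\mathrm{B}}]_{\ell,\ell^{\prime}}[\mathbf{M}_{k}^{(\ell)}]_{\epsilon,\epsilon^{\prime}}$ using the conditional independence of the path-gain randomness and the blocker motion given $l_{t}^{\mathrm{B}}$. If anything, you argue the factorization more explicitly than the paper (which calls it ``straightforward''), and your conditioning of $\mathbf{M}_{k}^{(\ell)}$ on the \emph{current} location $l_{t}^{\mathrm{B}}=\ell$ is the correct reading, consistent with \eqref{eqn:P_k} and Table \ref{tab:M_kl}.
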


\begin{proof}
	Please refer to Appendix B. 
\end{proof}

\begin{figure*}
	\begin{align}
		\label{eqn:PMF_D}
		\Pr\left[D_{t,k}^{\Pi}=d\big|l_{t}^{\mathrm{B}}\right]
		=\exp\left(-\frac{\rho_{k,i^{\star}}N_{0}W}{P^{\Pi}}\Big(2^{\frac{dN_{\mathrm{b}}}{W\tau_{t,k}^{\Pi}}}-1\Big)\right)
		-\exp\left(-\frac{\rho_{k,i^{\star}}N_{0}W}{P^{\Pi}}\Big(2^{\frac{(d+1)N_{\mathrm{b}}}{W\tau_{t,k}^{\Pi}}}-1\Big)\right).
	\end{align}
	\begin{align}
		\label{eqn:P_k}
		\mathbf{P}_{k}=
		\begin{pmatrix}
			\left[
				\mathbf
			P^{\mathrm{B}}\right]_{1,1}\mathbf{M}_{k}^{(1)}                       &
			\left[
				\mathbf
			P^{\mathrm{B}}\right]_{1,2}\mathbf{M}_{k}^{(1)}                       & \cdots & \left[
				\mathbf
			P^{\mathrm{B}}\right]_{1,|\mathbb{L}|}\mathbf{M}_{k}^{(1)}                                       \\
			\left[
				\mathbf
			P^{\mathrm{B}}\right]_{2,1}\mathbf{M}_{k}^{(2)}                       &
			\left[
				\mathbf
			P^{\mathrm{B}}\right]_{2,2}\mathbf{M}_{k}^{(2)}                       & \cdots & \left[
				\mathbf
				P^{\mathrm{B}}\right]_{2,|\mathbb{L}|}\mathbf{M}_{k}^{(2)}
			\\
			\vdots                                                                & \vdots & \ddots & \vdots \\
			\left[
				\mathbf
			P^{\mathrm{B}}\right]_{|\mathbb{L}|,1}\mathbf{M}_{k}^{(|\mathbb{L}|)} &
			\left[
				\mathbf
			P^{\mathrm{B}}\right]_{|\mathbb{L}|,2}\mathbf{M}_{k}^{(|\mathbb{L}|)} & \cdots & \left[
				\mathbf
				P^{\mathrm{B}}\right]_{|\mathbb{L}|,|\mathbb{L}|}\mathbf{M}_{k}^{(|\mathbb{L}|)}
		\end{pmatrix}
		,
	\end{align}
	\hrulefill
\end{figure*}

\begin{table*}\tiny
	\centering
	\renewcommand{\arraystretch}{1.5}
	\begin{tabular}{|c|c|c|c|c|c|c|}
		\hline
		$Q_{t,k}$              & $A_{t,k}^{\mathrm{s}}$                                & $A_{t,k}^{\mathrm{d}}$                                &
		$Q_{t+1,k}$            & $A_{t+1,k}^{\mathrm{s}}$                              & $A_{t+1,k}^{\mathrm{d}}$                              &
		$\left[\mathbf{M}_{k}^{(\ell)}\right]_{\epsilon(Q_{t\!,k}\!,A_{t\!,k}^{\mathrm{s}}\!,A_{t\!,k}^{\mathrm{d}}),\epsilon(Q_{t\!+\!1\!,k}\!,A_{t\!+\!1\!,k}^{\mathrm{s}}\!,A_{t\!+\!1\!,k}^{\mathrm{d}})}$ \\
		\hline
		$0$                    & $1,\!\ldots\!,A_{\mathrm{max}}$                       & $1,\!\ldots\!,A_{\mathrm{max}}$                       &
		$0$                    & $1$                                                   & $1$                                                   &
		$\Pr\left[D_{t,k}^{\Pi}\!\geq\!L_{k}\Big|l_{t}^{\mathrm{B}}\!=\!\ell\right]$                                                                                                                           \\
		\hline
		$0$                    & $1,\!\ldots\!,A_{\mathrm{max}}$                       & $1,\!\ldots\!,A_{\mathrm{max}}$                       &
		$1,\!\ldots\!,L_{k}$   & $1$                                                   & $\min\{A_{t,k}^{\mathrm{d}}{+}1\!,A_{\mathrm{max}}\}$ &
		$\Pr\left[D_{t,k}^{\Pi}\!=\!L_{k}\!-\!Q_{t+1,k}\Big|l_{t}^{\mathrm{B}}\!=\!\ell\right]$                                                                                                                \\
		\hline
		$1,\!\ldots\!,L_{k}$   & $1,\!\ldots\!,A_{\mathrm{max}}$                       & $1,\!\ldots\!,A_{\mathrm{max}}$                       &
		$0$                    & $\min\{A_{t,k}^{\mathrm{s}}{+}1\!,A_{\mathrm{max}}\}$ & $\min\{A_{t,k}^{\mathrm{s}}{+}1\!,A_{\mathrm{max}}\}$ &
		$\Pr\left[D_{t,k}^{\Pi}\!\geq\!Q_{t,k}\Big|l_{t}^{\mathrm{B}}\!=\!\ell\right]$                                                                                                                         \\
		\hline
		$1,\!\ldots\!,L_{k}$   & $1,\!\ldots\!,A_{\mathrm{max}}$                       & $1,\!\ldots\!,A_{\mathrm{max}}$                       &
		$1,\!\ldots\!,Q_{t,k}$ & $\min\{A_{t,k}^{\mathrm{s}}{+}1\!,A_{\mathrm{max}}\}$ & $\min\{A_{t,k}^{\mathrm{d}}{+}1\!,A_{\mathrm{max}}\}$ &
		$\Pr\left[D_{t,k}^{\Pi}\!=\!Q_{t,k}\!-\!Q_{t+1,k}\Big|l_{t}^{\mathrm{B}}\!=\!\ell\right]$                                                                                                              \\
		\hline
	\end{tabular}
	\caption{Non-zero entries of matrix $\mathbf{M}_{k}^{(\ell)}$.}
	\label{tab:M_kl}
	\vspace{-0.5cm}
\end{table*}
Finally, the value function of the reference policy, namely, the approximate value function, is given by the following theorem.
\begin{Theorem}[Value Function of Reference Policy $\Pi$]\label{thm:analytical_expression}
	With the reference policy $\Pi$, the value function is given by
	\begin{align}
		\label{eqn:local_value_function}
		W^{\Pi}(\widetilde{\mathcal{S}}_{t})
		= & \sum_{k\in\mathcal{K}}\underbrace{(\mathbf{e}_{\kappa(l_{t}^{\mathrm{B}},\epsilon(Q_{t\!,k},A_{t\!,k}^{\mathrm{s}},A_{t\!,k}^{\mathrm{d}}))}^{|\mathbb{L}|(L_{k}+1)A_{\mathrm{max}}^{2}})^{\mathsf{T}}[\mathbf{I}-\gamma\mathbf{P}_{k}]^{-1}\mathbf{g}_{k}}_{\triangleq W_{k}^{\Pi}(\widetilde{\mathcal{S}}_{t,k})}
		\nonumber                                                                                                                                                                                                                                                                                                               \\
		  & +\frac{1}{1-\gamma}w_{\mathrm{P}}T_{\mathrm{F}}P^{\Pi},
	\end{align}
	where $\mathbf{e}_{n}^{N}$ denotes an $N\times 1$ column vector whose $n$-th element is $1$ and otherwise $0$, and $\mathbf{g}_{k}\in\mathbb{R}^{|\mathbb{L}|(L_{k}+1)A_{\mathrm{max}}^{2}\times 1}$ is the cost vector for all abstract local state at the $k$-th sensor.
	Specifically, $
		\left[\mathbf{g}_{k}\right]_{\kappa(l_{t}^{\mathrm{B}},\epsilon(Q_{t\!,k},A_{t\!,k}^{\mathrm{s}},A_{t\!,k}^{\mathrm{d}}))}
		=
		A_{t,k}^{\mathrm{d}}\!+\!\mathbb{I}[Q_{t,k}\!=\!0]w_{\mathrm{P}}C^{\mathrm{s}}\!+\!\mathbb{I}[A_{t,k}^{\mathrm{d}}\!=\!A_{\mathrm{max}}]w_{\mathrm{Q}}.
	$
\end{Theorem}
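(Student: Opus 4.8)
The plan is to exploit the additive structure of the per-frame cost in \eqref{eqn:cost_function} together with the decoupling afforded by the reference policy, and then to recognize each per-sensor contribution as the solution of a standard discounted Markov reward problem.

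First I would substitute the reference policy $\Pi$ into \eqref{eqn:cost_function}. Under $\Pi$ the sampling action is $s_{t,k}^{\Pi}=\mathbb{I}[Q_{t,k}=0]$ and the power is the constant $P^{\Pi}$, so the per-frame cost is additive across sensors. I would then isolate the transmission-energy term: since $\sum_{k\in\mathcal{K}}\tau_{t,k}^{\Pi}p_{t,k}^{\Pi}=P^{\Pi}\sum_{k\in\mathcal{K}}\tau_{t,k}^{\Pi}=T_{\mathrm{F}}P^{\Pi}$ by the total-time constraint \eqref{eqn:total_time_constraint}, this term is a deterministic constant in every frame, independent of the state. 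Its discounted sum is therefore $\sum_{t=1}^{\infty}\gamma^{t-1}w_{\mathrm{P}}T_{\mathrm{F}}P^{\Pi}=\frac{1}{1-\gamma}w_{\mathrm{P}}T_{\mathrm{F}}P^{\Pi}$, which is exactly the last term in \eqref{eqn:local_value_function}. The remaining per-sensor cost collects precisely into the entries of $\mathbf{g}_{k}$ defined in the statement.

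Next I would apply linearity of expectation to the definition \eqref{eqn:value_function_definition}. Because the cost decomposes as $g^{\Pi}(\widetilde{\mathcal{S}}_{t})=\sum_{k\in\mathcal{K}}[\mathbf{g}_{k}]_{\kappa(l_{t}^{\mathrm{B}},\epsilon(Q_{t,k},A_{t,k}^{\mathrm{s}},A_{t,k}^{\mathrm{d}}))}+w_{\mathrm{P}}T_{\mathrm{F}}P^{\Pi}$ and the discount factor is common to all sensors, the value function splits into a sum over sensors of per-sensor discounted costs plus the constant term. The crucial observation is that the cost attributed to sensor $k$ depends only on its local abstract state $\widetilde{\mathcal{S}}_{t,k}=(l_{t}^{\mathrm{B}},Q_{t,k},A_{t,k}^{\mathrm{s}},A_{t,k}^{\mathrm{d}})$, and, by Lemma \ref{lem:P_k}, the marginal process $\{\widetilde{\mathcal{S}}_{t,k}\}_{t}$ is itself a Markov chain with transition matrix $\mathbf{P}_{k}$ under $\Pi$. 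Hence each per-sensor term is the expected discounted reward of a single finite-state Markov chain.

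Finally I would evaluate this single-chain discounted reward. Writing $\mathbf{W}_{k}$ for the vector stacking $W_{k}^{\Pi}(\widetilde{\mathcal{S}}_{t,k})$ over all local abstract states, the standard Markov-reward identity gives $\mathbf{W}_{k}=\sum_{n=0}^{\infty}\gamma^{n}\mathbf{P}_{k}^{n}\mathbf{g}_{k}$. Since $\mathbf{P}_{k}$ is row-stochastic its spectral radius equals $1$, so $\gamma\mathbf{P}_{k}$ has spectral radius $\gamma<1$ and the Neumann series converges to $[\mathbf{I}-\gamma\mathbf{P}_{k}]^{-1}\mathbf{g}_{k}$; selecting the entry for the current state via $\mathbf{e}_{\kappa(l_{t}^{\mathrm{B}},\epsilon(Q_{t,k},A_{t,k}^{\mathrm{s}},A_{t,k}^{\mathrm{d}}))}^{|\mathbb{L}|(L_{k}+1)A_{\mathrm{max}}^{2}}$ yields the claimed closed form for $W_{k}^{\Pi}$. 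Summing over $k\in\mathcal{K}$ and re-attaching the constant energy term reproduces \eqref{eqn:local_value_function}. I expect the main obstacle to be rigorously justifying the decoupling: although all sensors share the single blocker-location process, one must verify that the additive cost together with the per-sensor Markov property established in Lemma \ref{lem:P_k} are jointly sufficient to reduce the coupled $K$-sensor expectation to $K$ independent single-chain evaluations, without requiring the joint chain itself to factorize.
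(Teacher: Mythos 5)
Your proposal is correct and follows essentially the same route as the paper's own proof: isolate the constant transmission-energy term whose discounted sum is $\frac{1}{1-\gamma}w_{\mathrm{P}}T_{\mathrm{F}}P^{\Pi}$, decouple the remaining additive cost into per-sensor terms governed by the local abstract-state Markov chains of Lemma \ref{lem:P_k}, and sum the discounted series $\sum_{t\geq 1}\gamma^{t-1}\mathbf{P}_{k}^{t-1}\mathbf{g}_{k}$ into $[\mathbf{I}-\gamma\mathbf{P}_{k}]^{-1}\mathbf{g}_{k}$. The only difference is one of explicitness: the paper writes the expected per-frame cost as $\mathbf{e}^{\mathsf{T}}\mathbf{P}_{k}^{t-1}\mathbf{g}_{k}$ and defers the final inversion step to a cited reference, whereas you spell out the Neumann-series convergence and the linearity-of-expectation argument that makes the decoupling rigorous.
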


\begin{proof}
	Please refer to Appendix C.
\end{proof}

\subsection{Scheduling with Approximate Value Function}
\label{subsec:one-step_iteration}
Substituting the optimal value function $W(\widetilde{\mathcal{S}}_{t+1})$ of the problem in \eqref{eqn:one-step_iteration} with the approximate value function $W^{\Pi}(\widetilde{\mathcal{S}}_{t+1})$, the proposed scheduling policy in one frame (say the $t$-th frame) given the global system state $\mathcal{S}_{t}$ can be obtained from the following optimization problem.
\begin{equation}
	\begin{aligned}
		 & \mathsf{P2}:\
		\left\{(s_{t,k}^{\star},\tau_{t,k}^{\star},p_{t,k}^{\star})\right\}_{k\in\mathcal{K}}
		=\nonumber       \\
		 &
		\begin{aligned}
			 & \mathop{\arg\min}_{\left(s_{t,k},\tau_{t,k},p_{t,k}\right)_{k\in\mathcal{K}}}\quad
			\sum_{k\in\mathcal{K}}w_{\mathrm{P}}(\tau_{t,k}p_{t,k}+s_{t,k}C^{\mathrm{s}}) \nonumber                                                                                                                            \\
			 & \qquad+\gamma\sum_{\widetilde{\mathcal{S}}_{t+1}}W^{\Pi}(\widetilde{\mathcal{S}}_{t+1})\Pr\Big[\widetilde{\mathcal{S}}_{t+1}\Big|\mathcal{S}_{t},(s_{t,k},\tau_{t,k},p_{t,k})_{k\in\mathcal{K}}\Big], \nonumber \\
			 &
			\qquad\qquad\mathrm{s.t.}\quad\textrm{Constraints in \eqref{eqn:power_constraint}, \eqref{eqn:total_time_constraint}, \eqref{eqn:time_constraint}}.
		\end{aligned}
	\end{aligned}
\end{equation}

\begin{Remark}[Predictive Scheduling]
	Note that the LoS path is usually much better than the NLoS paths.
	If the LoS path of one sensor (say the $k$-th sensor) is likely blocked soon, the penalty of AoI outdatedness will start to incur. 
	Thus, its local value function $W_{k}^{\Pi}(\widetilde{\mathcal{S}}_{t,k})$ could be large for large values of AoI $A_{t\!,k}^{\mathrm{d}}$.
	Hence, the problem $\mathsf{P2}$ tends to schedule more transmission resources to this sensor, so that its AoI can be maintained at a low level.
\end{Remark}

\subsection{Alternative Optimization Algorithm}
\label{subsec:alternative}
Problem $\mathsf{P2}$ is a mixed continuous and discrete optimization problem with coupled variables $\{s_{t,k}\}_{k\in\mathcal{K}}$, $\{\tau_{t,k}\}_{k\in\mathcal{K}}$ and $\{p_{t,k}\}_{k\in\mathcal{K}}$.
An alternative optimization algorithm is proposed in Algorithm \ref{alg:alternative} to obtain the suboptimal solution.

\IncMargin{1em}
\begin{algorithm}
	\DontPrintSemicolon
	\SetInd{0.2em}{0.7em}
	\SetKwInOut{Input}{Input}
	\SetKwInOut{Output}{Output}
	\SetKwProg{Parfor}{for}{ do in parallel}{end}

	\Input{\;
	$\mathcal{S}_{t}$: System state in the $t$-th frame\;
	$\{W_{k}^{\Pi}(\widetilde{\mathcal{S}}_{t,k})\}_{\forall k\!,\widetilde{\mathcal{S}}_{t,k}}$: Approximate local value functions via Theorem \ref{thm:analytical_expression}
	}
	\Output{\;
		$\Psi^{\infty}(\mathcal{S}_{t})$: Converged action in the $t$-th frame
	}
	\BlankLine

	$n\leftarrow 0,s_{t,k}^{(0)}\leftarrow s_{t,k}^{\Pi},\tau_{t,k}^{(0)}\leftarrow\tau_{t,k}^{\Pi},d_{t,k}^{(0)}\leftarrow d_{t,k}^{\Pi}$\label{line:init}\;
	\While{not converge}{
	$n\leftarrow n+1$\;
	\Parfor{$k\in\mathcal{K}$}{
		\textit{Solve $s_{t,k}^{(n)}$ in $\mathsf{P2.1}(n,k)$ via comparison.}\;\label{line:P21}
	}
	\Parfor{$k\in\mathcal{K}$}{
		\textit{Solve $d_{t,k}^{(n)}$ in $\mathsf{P2.2}(n,k)$ via exhaustive search.}\;\label{line:P22}
	}
	\textit{Solve $\{\tau_{t,k}^{(n)}\}_{k\in\mathcal{K}}$ in $\mathsf{P2.3}(n)$ via Lemma \ref{lem:tau_tk}.}\;\label{line:P23}
	$p_{t,k}^{(n)}=\frac{1}{Y_{t,k}}\Big[2^{\wedge}\Big(\frac{d_{t,k}^{(n)}N_{\mathrm{b}}}{\tau_{t,k}^{(n)}W}\Big)-1\Big], \forall{k}$\;
	}
	\label{line:p_tkn}
	\Return{
	$\Psi^{\infty}(\mathcal{S}_{t})\leftarrow\{(s_{t,k}^{\infty},\tau_{t,k}^{\infty},p_{t,k}^{\infty})\}_{k\in\mathcal{K}}$
	}
	\label{line:return}
	\caption{Alternative optimization algorithm.}
	\label{alg:alternative}
\end{algorithm}
\DecMargin{1em}

Instead of optimizing $\{s_{t,k}\}_{k\in\mathcal{K}}$, $\{\tau_{t,k}\}_{k\in\mathcal{K}}$ and $\{p_{t,k}\}_{k\in\mathcal{K}}$, we optimize $\{s_{t,k}\}_{k\in\mathcal{K}}$, $\{\tau_{t,k}\}_{k\in\mathcal{K}}$ and $\{d_{t,k}\}_{k\in\mathcal{K}}$ alternatively, where $d_{t,k}$ denotes the number of transmission packets from the $k$-th sensor in the $t$-th frame.
Let $\{s_{t,k}^{(n)}\}_{k\in\mathcal{K}}$, $\{\tau_{t,k}^{(n)}\}_{k\in\mathcal{K}}$ and $\{d_{t,k}^{(n)}\}_{k\in\mathcal{K}}$ be the correspondingly optimized variables in the $n$-th iteration, respectively.
Initializing the actions with the reference policy $\Pi$ (Line \ref{line:init}), i.e., $
	(s_{t,k}^{(0)},\tau_{t,k}^{(0)},d_{t,k}^{(0)})\!=\!(s_{t,k}^{\Pi},\tau_{t,k}^{\Pi},d_{t,k}^{\Pi})
$, the entire procedure of solving $\mathsf{P2}$ consists of a number of iterations, and the $n$-th iteration includes the sub-problems $\{\mathsf{P2.1}(n,k)|\forall{k}\}$, $\{\mathsf{P2.2}(n,k)|\forall{k}\}$, and $\mathsf{P2.3}(n)$ in \eqref{eqn:P2.1}, \eqref{eqn:P2.2} and \eqref{eqn:P2.3}, respectively (Line 4--8), where
\begin{equation}
	p_{t,k}(d_{t,k})
	\!=\!
	\frac{1}{Y_{t,k}}
	[2^{\frac{d_{t,k}N_{\mathrm{b}}}{\tau_{t,k}^{(n\!-\!1)}W}}\!\!\!-\!1],\
	p_{t,k}(\tau_{t,k})
	\!=\!
	\frac{1}{Y_{t,k}}
	[2^{\frac{d_{t\!,k}^{(n)}N_{\mathrm{b}}}{\tau_{t,k}W}}\!\!\!-\!1].\nonumber
\end{equation}
Notice that the originally coupled sampling decision $s_{t,k}^{(n)}$ and transmission packet number allocation $d_{t,k}^{(n)}$ of all sensors are decoupled given the transmission time allocation $
	\{\!\tau_{t,k}^{(n)}
	\!
	\}_{k\in\mathcal{K}}
$, thus the complexity is significantly reduced.
\begin{figure*}
	\begin{align}
		 &
		\begin{aligned}
			\label{eqn:P2.1}
			\mathsf{P2.1}(n,k):\
			s_{t,k}^{(n)}=
			\mathop{\arg\min}_{s_{t,k}\in\{0,1\}}\quad & w_{\mathrm{P}}s_{t,k}C^{\mathrm{s}}\!+\!\gamma\sum_{\widetilde{\mathcal{S}}_{t+1,k}}W_{k}^{\Pi}(\widetilde{\mathcal{S}}_{t+1,k})\Pr\left[\widetilde{\mathcal{S}}_{t+1,k}\Big|\mathcal{S}_{t,k},(s_{t,k},\tau_{t,k}^{(n\!-\!1)},p_{t,k}^{(n\!-\!1)})\right]
		\end{aligned} \\
		 &
		\begin{aligned}
			\label{eqn:P2.2}
			\mathsf{P2.2}(n,k):\
			d_{t,k}^{(n)}= \!\!\!\!\!
			\!
			\mathop{\arg\min}_{
			d_{t,k}:\ p_{t,k}(d_{t,k})\leq P_{\mathrm{max}}
			} \!\!\!\! \!	w_{\mathrm{P}}\tau_{t,k}^{(n\!-\!1)}p_{t,k}(d_{t,k})\!+\!\gamma
			\!\!\!
			\sum_{\widetilde{\mathcal{S}}_{t+1,k}}
			\!\!\!
			W_{k}^{\Pi}(\widetilde{\mathcal{S}}_{t+1,k})
			\!
			\Pr
			\!
			\left[\widetilde{\mathcal{S}}_{t+1,k}\Big|\mathcal{S}_{t,k},\left(s_{t,k}^{(n)},\tau_{t,k}^{(n\!-\!1)},p_{t,k}(d_{t,k}) \!\right)\!\right]
		\end{aligned}                                                                                                                                                                                \\
		 &
		\begin{aligned}
			\label{eqn:P2.3}
			\mathsf{P2.3}(n):\
			\left\{\!\tau_{t,k}^{(n)}
			\!
			\right\}_{k\in\mathcal{K}}\! \!=
			\mathop{\arg\min}_{\left\{\tau_{t,k}\right\}_{k\in\mathcal{K}}}
			\sum_{k\in\mathcal{K}}
			\!
			\tau_{t,k}p_{t,k}(\tau_{t,k}) \quad
			\mathrm{s.t.}\
			\sum_{k\in\mathcal{K}}\tau_{t,k}
			\!
			=T_{\mathrm{F}},\ 0\leq\tau_{t,k}\leq T_{\mathrm{F}},\  p_{t,k}(\tau_{t,k})\leq P_{\mathrm{max}},\ \forall{k}\in\mathcal{K},
		\end{aligned}
	\end{align}
	\hrulefill
	\vspace{-0.5cm}
\end{figure*}
The optimal solution of $\{\mathsf{P2.1}(n,k)|\forall{k}\}$ can be derived by evaluating the binary local sampling action for $s_{t,k}\!=\!0$ and $s_{t,k}\!=\!1$ and choosing the one with the smaller value of the objective function (Line \ref{line:P21}).
Similarly, the optimal solution of $\{\mathsf{P2.2}(n,k)|\forall{k}\}$ can be solved by one-dimensional search over $d_{t,k}\!\in\!\{0,1,\!\ldots\!,Q_{t,k}\}$ (Line \ref{line:P22}).

Moreover, $\mathsf{P2.3}(n)$ is a convex optimization problem and the optimal solution can be derived by the following lemma (Line \ref{line:P23}).
\begin{Lemma}
	\label{lem:tau_tk}
	The optimal solution of $\mathsf{P2.3} (n)$ is given by
	\begin{align}
		\tau_{t,k}^{(n)}
		\!=\!
		\frac{d_{t,k}N_{b}}{W} \!\max\!\Bigg\{
		\!\!\frac{\ln{2}}{1\!+\!\mathbb{W}_{0}\big(\frac{Y_{t\!,k}\nu^{\star}\!-\!1}{e}\big)}, \frac{1}{\log_{2}(1\!+\!P_{\mathrm{max}}Y_{t\!,k})} \!\!\Bigg\}, \nonumber
	\end{align}
	where $\mathbb{W}_{0}(\cdot)$ denotes the principal branch of the Lambert W function, and $\nu^{\star}$ denotes the optimal Lagrangian multiplier for equality constraint \eqref{eqn:total_time_constraint} which can be solved by
	\begin{align}
		\label{eqn:Lagragian}
		\sum_{k\in\mathcal{K}}\max\Bigg\{
		 & \frac{d_{t,k}N_{\mathrm{b}}\ln{2}}{W\big[1\!+\!\mathbb{W}_{0}\big(\frac{Y_{t,k}\nu^{\star}\!-\!1}{e}\big)\big]}, \nonumber \\
		 & \frac{d_{t,k}N_{\mathrm{b}}}{W\log_{2}(1\!+\!P_{\mathrm{max}}Y_{t,k})}\Bigg\}
		\!-\!T_{\mathrm{F}}\!=\!0.
	\end{align}
\end{Lemma}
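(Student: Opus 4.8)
The plan is to recognize $\mathsf{P2.3}(n)$ as a convex, separable program with a single coupling equality constraint and to solve it through the KKT conditions. First I would rewrite the per-sensor objective: writing $c_k\triangleq d_{t,k}N_{\mathrm{b}}/W$ and using $p_{t,k}(\tau)=\frac{1}{Y_{t,k}}(2^{c_k/\tau}-1)$, the contribution of sensor $k$ is $f_k(\tau)\triangleq\frac{\tau}{Y_{t,k}}\big(2^{c_k/\tau}-1\big)$. A direct second-derivative computation gives $f_k''(\tau)=\frac{(c_k\ln 2)^2}{Y_{t,k}\tau^3}2^{c_k/\tau}>0$ for all $\tau>0$, so each $f_k$ is strictly convex and the objective $\sum_k f_k(\tau_{t,k})$ is convex. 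Since $p_{t,k}(\cdot)$ is strictly decreasing, the power constraint $p_{t,k}(\tau_{t,k})\le P_{\mathrm{max}}$ is equivalent to the lower bound $\tau_{t,k}\ge\tau_k^{\mathrm{min}}\triangleq c_k/\log_2(1+P_{\mathrm{max}}Y_{t,k})$, which with $0\le\tau_{t,k}\le T_{\mathrm{F}}$ forms a box. Hence $\mathsf{P2.3}(n)$ is convex with an affine equality constraint, so the KKT conditions are necessary and sufficient for optimality.

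Next I would form the Lagrangian $L=\sum_k f_k(\tau_{t,k})+\nu\big(\sum_k\tau_{t,k}-T_{\mathrm{F}}\big)$ with multiplier $\nu$ for the coupling constraint \eqref{eqn:total_time_constraint}, handling the box bounds separately. For a sensor whose lower bound is inactive, stationarity reads $f_k'(\tau_{t,k})=-\nu$, i.e. $\frac{1}{Y_{t,k}}\big[2^{c_k/\tau}(1-\tfrac{c_k\ln 2}{\tau})-1\big]=-\nu$. Substituting $v\triangleq c_k\ln 2/\tau$ turns this into $e^{v}(1-v)=1-\nu Y_{t,k}$. I would then invert this transcendental equation by setting $v=1+w$, which reduces it to $w\,e^{w}=(Y_{t,k}\nu-1)/e$, whose solution is $w=\mathbb{W}_0\big(\tfrac{Y_{t,k}\nu-1}{e}\big)$; hence $v=1+\mathbb{W}_0\big(\tfrac{Y_{t,k}\nu-1}{e}\big)$ and $\tau_{t,k}=c_k\ln 2\big/\big[1+\mathbb{W}_0\big(\tfrac{Y_{t,k}\nu-1}{e}\big)\big]$, which is the first argument of the claimed $\max$ after substituting $c_k=d_{t,k}N_{\mathrm{b}}/W$.

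To justify the principal branch I would study $h(v)\triangleq e^{v}(1-v)$: since $h'(v)=-v\,e^{v}<0$ for $v>0$, $h$ strictly decreases from $h(0)=1$ to $-\infty$, so the stationarity equation has a unique root $v>0$ whenever $1-\nu Y_{t,k}<1$, i.e. $\nu>0$, which the active equality constraint enforces. This root corresponds to $w>-1$, i.e. the principal branch $\mathbb{W}_0$ (the secondary branch $\mathbb{W}_{-1}$ would give $v<0$, hence an infeasible $\tau<0$), and its argument $(Y_{t,k}\nu-1)/e>-1/e$ indeed lies in the domain of $\mathbb{W}_0$. The remaining box bounds follow from a standard active-set argument: if the unconstrained stationary value violates $\tau_{t,k}\ge\tau_k^{\mathrm{min}}$, complementary slackness forces the power constraint active and $\tau_{t,k}=\tau_k^{\mathrm{min}}$, which is the second argument of the $\max$; the upper bound $\tau_{t,k}\le T_{\mathrm{F}}$ is inactive since any $\tau_{t,k}=T_{\mathrm{F}}$ would violate \eqref{eqn:total_time_constraint} for the other sensors. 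Combining the two cases yields exactly the $\max\{\cdot,\cdot\}$ expression.

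Finally, the optimal multiplier $\nu^\star$ is pinned down by primal feasibility: substituting the closed-form $\tau_{t,k}(\nu)$ into $\sum_k\tau_{t,k}=T_{\mathrm{F}}$ gives \eqref{eqn:Lagragian}, whose left-hand side is monotonically nonincreasing in $\nu$ (each term is a $\max$ of a decreasing function of $\nu$ and a constant), so a bisection locates the unique $\nu^\star$. I expect the main obstacle to be the careful branch-and-domain bookkeeping in the Lambert W inversion—verifying that the convex minimizer corresponds to $\mathbb{W}_0$ rather than $\mathbb{W}_{-1}$ and that the two box constraints combine cleanly into the single $\max\{\cdot,\cdot\}$—rather than the convexity check or the algebra, which are routine.
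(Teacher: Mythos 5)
Your derivation is correct, but it is worth noting that the paper does not actually prove this lemma at all: its ``proof'' is a one-line deferral to Theorem~1 of \cite{9145118}, which handles the same energy-minimal TDMA time-allocation problem. Your KKT/Lambert-W argument is the standard route for that class of problems and is, in substance, what the cited result does; the difference is that yours is self-contained. Each step checks out: the second derivative $f_k''(\tau)=\frac{(c_k\ln 2)^2}{Y_{t,k}\tau^3}2^{c_k/\tau}>0$ establishes strict convexity; the power cap correctly converts to the lower bound $\tau_k^{\mathrm{min}}=c_k/\log_2(1+P_{\mathrm{max}}Y_{t,k})$; the substitutions $v=c_k\ln 2/\tau$, $v=1+w$ correctly reduce stationarity to $w e^{w}=(Y_{t,k}\nu-1)/e$; and the branch selection ($v>0\Rightarrow w>-1\Rightarrow\mathbb{W}_0$, with argument exceeding $-1/e$ because $\nu>0$, which follows from $f_k'<0$ everywhere) is handled carefully --- this is exactly the bookkeeping a reader cannot check without opening the reference, so your proof adds value. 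One minor imprecision: your dismissal of the upper bound $\tau_{t,k}\le T_{\mathrm{F}}$ (``it would violate \eqref{eqn:total_time_constraint} for the other sensors'') is not quite right when other sensors have $d_{t,k'}=0$ and can legitimately take $\tau_{t,k'}=0$; the clean argument is that every term in the sum in \eqref{eqn:Lagragian} is nonnegative, so once $\nu^{\star}$ enforces primal feasibility each $\tau_{t,k}(\nu^{\star})$ is automatically at most $T_{\mathrm{F}}$, and no clipping at the upper end is ever needed. With that small repair, your proof is a complete, in-paper replacement for the external citation.
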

\begin{proof}
	Please refer to the Theorem 1 in \cite{9145118} for the proof.
\end{proof}
Because the left-hand-side (LHS) of \eqref{eqn:Lagragian} is non-increasing, the bisection method can be used to solve $\nu^{\star}$.
Given $d_{t,k}^{(n)}$ and $\tau_{t,k}^{(n)}$, the transmission power allocation in the $n$-th iteration, denoted by $p_{t,k}^{(n)}$, can be derived analytically (Line \ref{line:p_tkn}).

Since the iterative optimization of the sub-problems $\{\mathsf{P2.1}(n,k)|\forall{k}\}$, $\{\mathsf{P2.2}(n,k)|\forall{k}\}$, and $\mathsf{P2.3}(n)$ would lead to a smaller or equal value of the objective function in $\mathsf{P2}$, the above iteration will always converge (Line \ref{line:return}).

\subsection{Performance Bound and Time Complexity}
\label{subsec:performance_bound}
Denote $
	\Psi^{(\!n\!)}
	: \widetilde{\mathcal{S}}_{t} \to
	\big\{s_{t,k}^{(\!n\!)},\tau_{t,k}^{(\!n\!)},p_{t,k}^{(\!n\!)} \big\}_{ k\in \mathcal{K}}
$ as the scheduling policy obtained after $n$ iterations, $
	\widetilde{W}^{\Psi^{(\!n\!)}}(\widetilde{\mathcal{S}}_{t})
$ as the corresponding value functions, and $
	\Psi^{\infty}:
	\widetilde{\mathcal{S}}_{t} \to
	\big\{s_{t,k}^{\infty},\tau_{t,k}^{\infty},p_{t,k}^{\infty}\big\}_{ k\in \mathcal{K}}
$ as the scheduling policy after convergence.
The performance of $\Psi^{\infty}$ and $\Psi^{(\!n\!)}$ can be guaranteed by the following lemma.
\begin{Lemma}[Performance Bound]\label{lem:performance_bound}
	The average discounted cost of policies $\Psi^{(\!n\!)}$ and $\Psi^{\infty}$ can be bounded by
	\begin{align*}
		W(\!\widetilde{\mathcal{S}}_{t}\!)
		\!
		\leq
		\!
		W^{\!\Psi^{\!\infty}}
		\!\!
		(\!\widetilde{\mathcal{S}}_{t}\!)
		\!\leq\! \ldots \!
		\leq\!
		\widetilde{W}^{\!\Psi^{(\!n\!)}}
		\!\!(\!\widetilde{\mathcal{S}}_{t}\!)\!
		\leq
		\!
		\ldots
		\!
		\leq
		\!
		\widetilde{W}^{\!\Psi^{(\!1\!)}}
		\!\!(\!\widetilde{\mathcal{S}}_{t}\!)
		\!\leq\!
		W^{\!\Pi}\!(\!\widetilde{\mathcal{S}}_{t}\!).
	\end{align*}
\end{Lemma}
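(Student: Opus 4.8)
The plan is to recognize this as an approximate-policy-iteration bound and to prove it with the classical monotonicity-and-contraction machinery of discounted MDPs, adapted to the abstract-state Bellman equation in \eqref{eqn:refined_bellman}. For any stationary policy $\Omega$ I would introduce the per-policy operator acting on an abstract-state value function $V$ by $(T_{\Omega}V)(\widetilde{\mathcal{S}})\triangleq\mathbb{E}_{\mathcal{Y}}\big[g(\mathcal{S},\Omega(\mathcal{S}))+\gamma\sum_{\widetilde{\mathcal{S}}'}V(\widetilde{\mathcal{S}}')\Pr[\widetilde{\mathcal{S}}'|\mathcal{S},\Omega(\mathcal{S})]\big]$. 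First I would record the two standard facts that drive everything: $T_{\Omega}$ is monotone (if $V_{1}\leq V_{2}$ pointwise then $T_{\Omega}V_{1}\leq T_{\Omega}V_{2}$, since the expectation and the nonnegative transition weights preserve the inequality) and is a $\gamma$-contraction in the sup-norm, so it has a unique fixed point $W^{\Omega}$, which by \eqref{eqn:value_function_definition} is exactly the true discounted value of $\Omega$. In particular the reference policy satisfies $T_{\Pi}W^{\Pi}=W^{\Pi}$, which is the content of Theorem~\ref{thm:analytical_expression}.

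Next I would establish the right half of the chain. The key observation is that the objective minimized in $\mathsf{P2}$ at an action $\Omega(\mathcal{S}_{t})$ equals $(T_{\Omega}W^{\Pi})(\widetilde{\mathcal{S}}_{t})$ up to the terms of $g$ in \eqref{eqn:cost_function} that depend only on the current state ($A^{\mathrm{d}}_{t,k}$ and the outdatedness penalty) and are therefore constant in the minimization. Since Algorithm~\ref{alg:alternative} is initialized at $\Psi^{(0)}=\Pi$ (Line~\ref{line:init}) and each sub-problem $\mathsf{P2.1}$, $\mathsf{P2.2}$, $\mathsf{P2.3}$ can only decrease the $\mathsf{P2}$ objective for every realization of $\mathcal{Y}_{t}$, taking $\mathbb{E}_{\mathcal{Y}_{t}}$ gives, with $\widetilde{W}^{\Psi^{(n)}}\triangleq T_{\Psi^{(n)}}W^{\Pi}$, the monotone ordering $\widetilde{W}^{\Psi^{(n)}}\leq\cdots\leq\widetilde{W}^{\Psi^{(1)}}\leq T_{\Psi^{(0)}}W^{\Pi}=T_{\Pi}W^{\Pi}=W^{\Pi}$, which is exactly the right tail of the claimed inequalities.

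For the left half I would invoke a policy-improvement telescoping. From $T_{\Psi^{(n)}}W^{\Pi}\leq W^{\Pi}$, applying the monotone operator $T_{\Psi^{(n)}}$ repeatedly yields the decreasing sequence $T^{m+1}_{\Psi^{(n)}}W^{\Pi}\leq T^{m}_{\Psi^{(n)}}W^{\Pi}$, whose contraction limit is the true value $W^{\Psi^{(n)}}$; hence $W^{\Psi^{(n)}}=\lim_{m}T^{m}_{\Psi^{(n)}}W^{\Pi}\leq T_{\Psi^{(n)}}W^{\Pi}=\widetilde{W}^{\Psi^{(n)}}$. Specializing to the converged policy and using that $\{\widetilde{W}^{\Psi^{(n)}}\}$ is non-increasing with limit $\widetilde{W}^{\Psi^{\infty}}$ gives $W^{\Psi^{\infty}}\leq\widetilde{W}^{\Psi^{\infty}}\leq\widetilde{W}^{\Psi^{(n)}}$ for every $n$, which is the middle of the chain. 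Finally $W\leq W^{\Psi^{\infty}}$ follows directly from Bellman optimality in \eqref{eqn:refined_bellman}, since $W$ is the minimum value over all admissible policies and $\Psi^{\infty}$ is one such policy.

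The main obstacle I anticipate is bookkeeping rather than a deep idea: carefully justifying that the per-realization monotone decrease of the $\mathsf{P2}$ objective survives the $\mathbb{E}_{\mathcal{Y}_{t}}$ step and correctly lines up with the abstract-state operator $T_{\Omega}$ (including the dropping of the action-independent cost terms), and keeping the two distinct objects straight, namely the one-step-lookahead surrogate $\widetilde{W}^{\Psi^{(n)}}=T_{\Psi^{(n)}}W^{\Pi}$ versus the genuine value $W^{\Psi^{(n)}}$, so that the telescoping inequality $W^{\Psi^{(n)}}\leq\widetilde{W}^{\Psi^{(n)}}$ is applied only where the fixed-point and contraction hypotheses actually hold.
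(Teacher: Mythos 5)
Your proposal is correct and follows essentially the same route as the paper: the paper's proof handles the leftmost inequality by optimality of $W(\widetilde{\mathcal{S}}_{t})$ and defers the rest of the chain to the policy improvement property in Chapter II of \cite{bertsekas2012dynamic}, which is precisely the monotonicity-plus-contraction argument you spell out (including the key distinction between the one-step-lookahead surrogate $\widetilde{W}^{\Psi^{(n)}}=T_{\Psi^{(n)}}W^{\Pi}$ and the true value $W^{\Psi^{(n)}}$, which the paper leaves implicit). Your write-up is simply a fully detailed version of the argument the paper cites.
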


\begin{proof}
	Since $W(\widetilde{\mathcal{S}}_{t})$ is the optimal value function, it is the lower bound of value function of an arbitrary policy.
	The proof of $
		W^{\Psi^{\!\infty}}
		\!\!
		(\widetilde{\mathcal{S}}_{t})
		\!\leq\! \ldots \!
		\leq\!
		\widetilde{W}^{\Psi^{(\!n\!)}}
		\!\!(\widetilde{\mathcal{S}}_{t})\!
		\leq
		\!
		\ldots
		\!
		\leq
		\!
		\widetilde{W}^{\Psi^{(\!1\!)}}
		\!\!(\widetilde{\mathcal{S}}_{t})
	$ and $
		\widetilde{W}^{\Psi^{(\!1\!)}}
		\!\!(\widetilde{\mathcal{S}}_{t})\!\leq\!W^{\Pi}(\widetilde{\mathcal{S}}_{t})
	$ resembles the proof of the policy improvement property in Chapter II of \cite{bertsekas2012dynamic}.
\end{proof}

For conventional value iteration method, the transmission time and power shall be first discretized into $\tau_{\mathrm{D}}$ and $p_{\mathrm{D}}$ levels, respectively.
Then the cardinalities of the abstract state space and action space can be denoted by $
	|\widetilde{\mathbb{S}}|
	\!\triangleq\!
	|\mathbb{L}|A_{\mathrm{max}}^{2K}\prod_{k\in\mathcal{K}}L_{k}
$ and $
	|\mathbb{A}|
	\!\triangleq\!
	(2\tau_{\mathrm{D}}p_{\mathrm{D}})^{K}
$, respectively.
The time complexity of the conventional value iteration is $
	O(|\widetilde{\mathbb{S}}|^{2}|\mathbb{A}|)
$ for each iteration.
On the other hand, the proposed scheme consists of two stages, namely the evaluation of the approximate value function and the alternative optimization algorithm.
The time complexity of the approximate value function is $
	O(|\widetilde{\mathbb{S}}|[|\mathbb{L}|A_{\mathrm{max}}^{2}\sum_{k\in\mathcal{K}}L_{k}]^{3})
$ and the time complexity of the alternative optimization algorithm is $
	O(|\widetilde{\mathbb{S}}|\sum_{k\in\mathcal{K}}L_{k})
$ for each iteration.
Benefiting from the analytical expressed approximate value function, a number of iterations from the initial value function to the value function of a roughly-good policy is prevented, and thus the time complexity is essentially reduced.

\addtolength{\textheight}{0.1in}
\section{Simulations and Discussions}
\label{sec:simulation}
In this section, the performance of the proposed algorithm is demonstrated via simulations, where a number of benchmarks are used in the comparison.
We summarize the key findings of our simulations as follows:
\begin{itemize}[leftmargin=10pt]
	\item Our proposed scheme can converge after only a few iterations and reduce the average per-frame cost by $13.5\%$--$49.6\%$ compared with the benchmarks.
	\item Our proposed scheme can proactively keep the AoIs of sensors with future channel degradation at a low level in example traces, which intuitively verifies the benefits of exploiting mobile blocker detection in AoI-oriented scheduling design.
	\item Our proposed scheme shows the better performance compared to the benchmarks with robustness against the number of sensors.
\end{itemize}

As illustrated in Fig. \ref{fig:layout}, we consider a 20m$\times$20m square room with walls serving as the reflectors of the NLoS paths ($M\!=\!4$), where the BS is deployed at the center block of the room and the locations of sensors are uniformly distributed near the walls.
The mobility of the human blocker with radius $r_{\mathrm{B}}\!=\!0.3\textrm{m}$ follows a modified random walk.
The probability of staying in the same grid in next frame is $0.90$, while the probabilities of moving to one of the feasible neighboring blocks in the blue region are equal.
An illustrative blocker trajectory is shown by arrows in Fig. \ref{fig:layout}.
Other parameters are summarized in Table \ref{tab:parameter}.
We evaluate the proposed algorithms with different numbers of iterations in solving $\mathsf{P2}$, namely, $\Psi^{(\!1\!)}$, $\Psi^{(\!2\!)}$, $\Psi^{(\!3\!)}$, $\Psi^{\infty}$, and compare them with the following three benchmark policies, which are referred to as BM1, BM2 and BM3, respectively.
All the benchmarks adopt the same sampling policy and transmission power allocation as the reference policy, i.e., $s_{t,k}\!=\!\mathbb{I}[Q_{t,k}\!=\!0]$ and $p_{t,k}\!=\!P^{\Pi}$, while the transmission time allocation is elaborated as follows.

\begin{figure}[tb]
	\centering
	\includegraphics[width=0.6\linewidth]{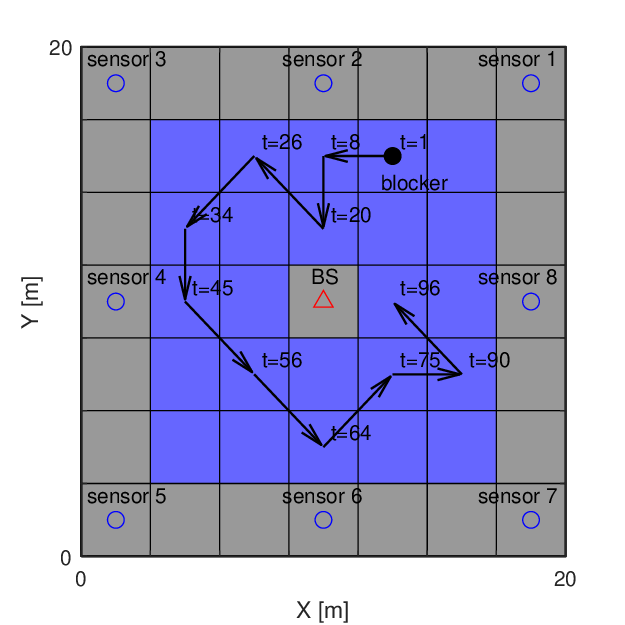}
	\caption{Simulation scenario and an illustrative trajectory of the blocker.}
	\label{fig:layout}
\end{figure}

\begin{table}[b]
	\footnotesize
	\setlength\tabcolsep{1pt}
	\centering
	\begin{tabular}{ccc}
		\toprule
		\textbf{Parameter}                      & \textbf{Symbol}                   & \textbf{Value}     \\
		\midrule
		Carrier frequency                       & $f_{\mathrm{c}}$                  & 60 GHz             \\
		\# of sensors                           & $K$                               & 4,5,6,7,8          \\
		\# of receive/transmit antenna elements & $N_{\mathrm{R}},\!N_{\mathrm{T}}$ & 64, 128            \\
		Path loss                               & $\rho_{k,i}$                      & \makecell{
		LoS: $32.5\!+\!20\log(f_{\mathrm{c}})$                                                           \\
		$\qquad\!+\!20\log(R)$\cite{maltsev2010channel}                                                  \\
		NLoS: $32.5\!+\!20\log(f_{\mathrm{c}})$                                                          \\
			$\qquad\!+\!20\log(R)\!+\!15$
		}                                                                                                \\
		\# of possible locations of the blocker & $|\mathbb{L}|$                    & 24                 \\
		Bandwidth                               & $W$                               & 400 MHz            \\
		Frame duration                          & $T_{\mathrm{F}}$                  & 10 ms              \\
		Packet size                             & $N_{\mathrm{b}}$                  & 200 KB             \\
		Noise power spectral density            & $N_{0}$                           & -174 dBm/Hz        \\
		Data volume                             & $L_{k}$                           & $\mathcal{U}(3,5)$ \\
		Threshold for outdated AoI              & $A_{\mathrm{max}}$                & 10                 \\
		Maximum transmission power              & $P_{\mathrm{max}}$                & 100 mW             \\
		Discount factor                         & $\gamma$                          & 0.98               \\
		\makecell{
		Weights for energy consumption                                                                   \\
			and outdated AoI penalty
		}                                       & $w_{\mathrm{P}},\!w_{\mathrm{Q}}$ & 10000, 100         \\
		Sampling energy consumption             & $C^{\mathrm{s}}$                  & $10^{-4}$ J        \\
		Transmission power of reference policy  & $P^{\mathrm{\Pi}}$                & 50 mW              \\
		\bottomrule
	\end{tabular}
	\caption{Parameter configuration of the simulation.}
	\label{tab:parameter}
\end{table}

\begin{BM}[Reference Policy]\label{Benchmark1}
	The transmission time allocation of each sensor is proportional to the corresponding data volume of sample at each sensor, i.e., $
		\tau_{t,k}\!=\!T_{\mathrm{F}}L_{k}/\sum_{k^{\prime}\!\in\!\mathcal{K}}L_{k^{\prime}}
	$.
\end{BM}

\begin{BM}[Largest-AoI First]\label{Benchmark2}
	The sensor with the largest AoI at the server, i.e., $\mathop{\arg\max}_{k}A_{t,k}^{\mathrm{d}}$, is scheduled for transmission sequentially, until transmission time of the frame is used up.
\end{BM}

\begin{BM}[Dynamic Backpressure \cite{georgiadis2006resource}]\label{Benchmark3}
	The sensor with the largest product of buffer length and uplink capacity, i.e., $\mathop{\arg\max}_{k}Q_{t,k}R_{t,k}$,  is scheduled for transmission sequentially, until transmission time of the frame is used up.
\end{BM}

BM1 corresponds to the reference policy we adopt in the proposed scheme with fair allocation based on uploading workload. BM2 selects UE greedily to mitigate the outdatedness AoI penalty. BM3 adopts a UE selection strategy based on both queue length and uplink channel capacity.

Fig. \ref{fig:CDF_and_barplot}(a) displays the cumulative distribution functions (CDFs) of per-frame cost of the proposed scheduling scheme as well as those of the three benchmarks, when the number of sensors is $K=8$.
The step-shape curves are caused by the outdated AoI penalty (whose weight is $w_{\mathrm{Q}}\!=\!100$ in the simulation).
It can be observed that the proposed policy with $n\!=\!1$, denoted as $\Psi^{(\!1\!)}$, has a significantly better curve than the reference policy (BM1).
Moreover, the curves of $\Psi^{(3)}$ and $\Psi^{\infty}$ are close, demonstrating that only a small number of iterations are needed in the optimization of $\mathsf{P2}$.
BM2 allocates all the transmission time to the sensors with the largest AoIs at the server, which leads to the highest probability of  per-frame costs below $150$ (one sensor with outdated AoI penalty in average).
However, the largest-AoI-first scheduling in BM2 may be stuck to the sensor whose LoS path is blocked, leading to more outdated AoI penalty.
It can be observed that compared with BM3, BM2 has much lower probability when the per-frame cost is below $250$.
The average per-frame costs are $129.3$, $256.3$, $203.5$ and $149.5$ for the proposed scheme $\Psi^{\infty}$ and the benchmarks respectively, indicating that the average performance of the proposed scheme (cost reduction by $49.6\%$, $36.5\%$ and $13.5\%$, respectively) is the best.

Fig. \ref{fig:CDF_and_barplot}(b) sketches the bar chart of the four components of per-frame cost.
BM1 has the highest outdated AoI penalty, because its scheduling is independent of the system state.
BM2 has the lowest AoI penalty, but consumes significantly higher transmission energy than the proposed scheme.
BM3 has the highest sampling energy consumption and thus highest sampling frequency, because it tends to select the sensor with high throughput.
Hence, the proposed scheme achieves the best balance between AoI and energy consumption.
\begin{figure}[htb]
	\centering
	\subfloat[]{
		\includegraphics[width=0.47\linewidth]{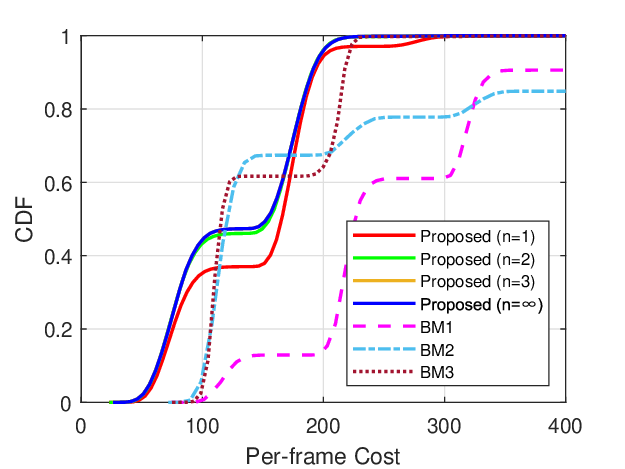}
	}
	\hfill
	\subfloat[]{
		\includegraphics[width=0.47\linewidth]{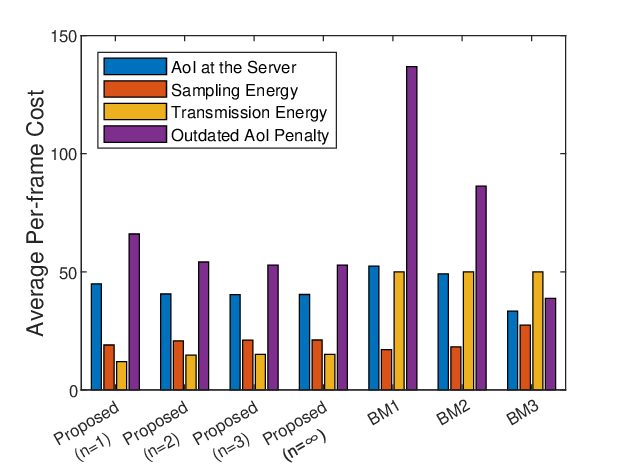}
	}
	\caption{(a) CDF of the per-frame cost.
		(b) Average performance of the AoI at the server, sampling energy, transmission energy and outdated AoI penalty.}
	\label{fig:CDF_and_barplot}
\end{figure}

The insights on blockage-predictive scheduling of the proposed policy can be obtained in Fig. \ref{fig:insight}, which shows the dynamics of the AoI at the $4$-th sensor and the corresponding AoI at the server.
The trajectory of the blocker for this trace is illustrated in Fig. \ref{fig:layout}.
The LoS path between the BS and the $4$-th sensor is blocked since the $45$-th frame.
Compared with BM1, the proposed policy can detect future channel degradation, and keep the AoIs at the $4$-th sensor and the BS at a low level before the $45$-th frame, which reduces the time duration with outdated AoI at the server.
\begin{figure}[htb]
	\centering
	\includegraphics[width=0.85\linewidth]{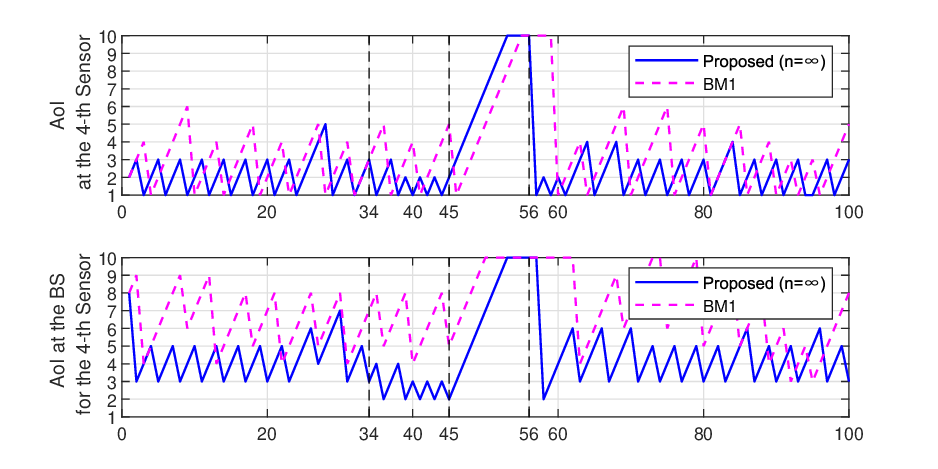}
	\caption{AoI dynamics of the $4$-th sensor.}
	\label{fig:insight}
\end{figure}

The average per-frame cost versus the number of sensors is studied in Fig. \ref{fig:SenAnaK}.
The average per-frame cost of the proposed scheme is always lower than the benchmarks, which verifies the better performance of the proposed scheme and its robustness against the number of sensors.
\begin{figure}[htb]
	\centering
	\includegraphics[width=0.6\linewidth]{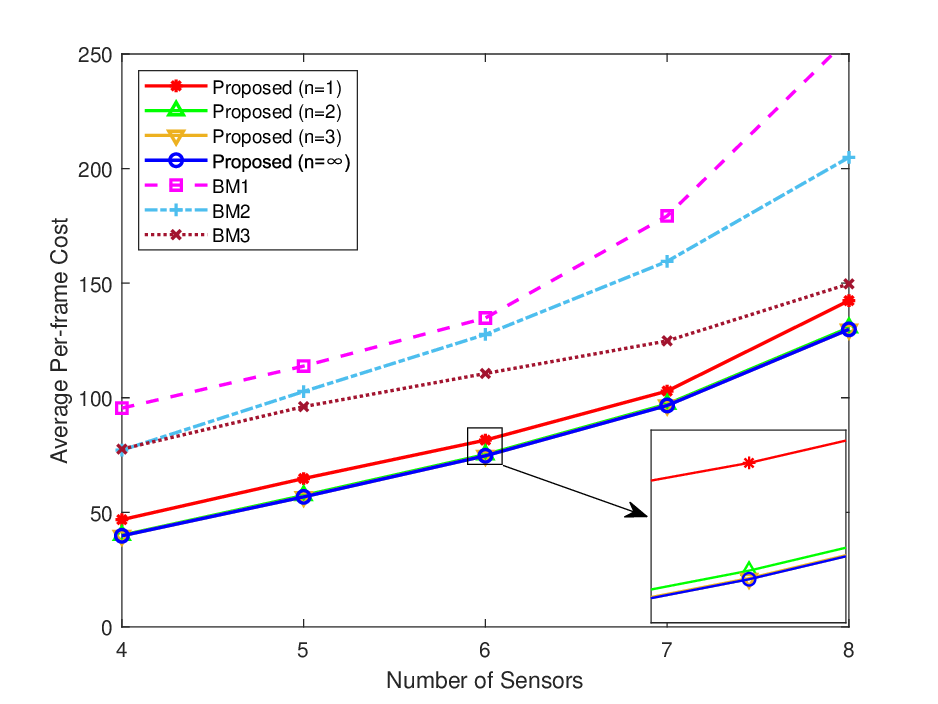}
	\caption{Average per-frame cost versus the number of sensors.}
	\label{fig:SenAnaK}
\end{figure}

\section{Conclusion}
\label{sec:conclusion}
In this paper, we formulate the dynamic scheduling of sampling and uploading in an mmWave-based WSN with random human blockage as an infinite-horizon MDP with discounted cost, where the weighted sum of average AoI and system energy consumption is the minimization objective.
Since the system state space grows exponentially with respect to the number of sensors, an approximate MDP solution framework is proposed to address the curse of dimensionality.
In the proposed framework, the optimal value function is approximated by an analytical expression derived from a reference policy.
The numerical evaluation of the value function in conventional approximate MDP solutions can be eliminated.
Finally, the policy iteration based on analytical value function is significantly more efficient than lookup-table-based value functions in conventional solution frameworks.
Thus, the solution complexity is greatly reduced.
Moreover, the impact of random human blockage on future cost is predicted in the approximate value function, and hence, mitigated in the scheduling of the current frame.
In future work, we shall consider low-complexity scheduling for multi-blocker scenarios, where how to address the curse of dimensionality caused by large location state space is challenging.

\section*{Appendix A: PROOF of LEMMA \ref{lem:PMF_D}}
\label{proof:PMF_D}
According to \cite{1033686}, with sufficient large $N_{\mathrm{R}}$ and $N_{\mathrm{T}}$, $f_{\mathrm{R}}(\phi)\!\rightarrow\!1$ and $f_{\mathrm{T}}(\theta)\!\rightarrow\!1$ if $\phi=0$ and $\theta=0$, otherwise $f_{\mathrm{R}}(\phi)\!\rightarrow\!0$ and $f_{\mathrm{T}}(\theta)\!\rightarrow\!0$, where $f_{\mathrm{R}}(\phi)\!=\!|\mathbf{a}_{\mathrm{R}}^{\mathsf{H}}(0)\mathbf{a}_{\mathrm{R}}(\phi)|$ and $f_{\mathrm{T}}(\theta)\!=\!|\mathbf{a}_{\mathrm{T}}^{\mathsf{H}}(0)\mathbf{a}_{\mathrm{T}}(\theta)|$.

Hence, substituting \eqref{eqn:H_tk} into \eqref{eqn:beamforming_1}, we can derive
\begin{align}
	\label{eqn:lem1_1}
	\big(\mathbf{w}_{t\!,k},\mathbf{f}_{t\!,k}\big)
	=
	 & \mathop{\arg\max}_{{}
	^{\mathbf{a}_{\mathrm{R}}(\phi_{q})\in\mathcal{W}}
	_{\mathbf{a}_{\mathrm{T}}(\phi_{p})\in\mathcal{F}}}
	\mathbb{E}_{(\alpha_{t\!,k\!,i})_{i\!\in\!\mathcal{M}}}
	\Big|
	\sum_{i\!\in\!\mathcal{M}}B_{t\!,k\!,i}(l_{t}^{\mathrm{B}})\alpha_{t\!,k\!,i} \nonumber \\
	 & \times f_{\mathrm{R}}(\phi_{k\!,i}\!-\!\phi_{q})
	f_{\mathrm{T}}(\theta_{p}\!-\!\theta_{k\!,i})
	\Big|^{2}                                                                               \\
	\label{eqn:lem1_2}
	=
	 & \mathop{\arg\max}_{{}
	{\mathbf{a}_{\mathrm{R}}(\phi_{k\!,i})},
	{\mathbf{a}_{\mathrm{T}}(\phi_{k\!,i})}}
	B_{t\!,k\!,i}(l_{t}^{\mathrm{B}})\rho_{k\!,i}^{-1}.
\end{align}
Let $i^{\star}\!=\!\mathop{\arg\max}_{i}B_{t,k,i}(l_{t}^{\mathrm{B}})\rho_{k,i}^{-1}$, \eqref{eqn:lem1_2} can be written by $
	(\mathbf{w}_{t\!,k},\mathbf{f}_{t\!,k})
	\!=\!
	(\mathbf{a}_{\mathrm{R}}(\phi_{k\!,i^{\star}}),\mathbf{a}_{\mathrm{T}}(\phi_{k\!,i^{\star}}))
$.
Then the baseband gain can be represented by
$
	Y_{t,k}(l_{t}^{\mathrm{B}})
	\!=\!
	|\alpha_{t,k,i^{\star}}|^{2}/(N_{0}W)
$.


Given $
	\alpha_{t,k,i^{\star}}\!\sim\!\mathcal{CN}(0,\rho_{k,i^{\star}}^{-1})
$, $
	|\alpha_{t,k,i^{\star}}|^{2}\!=\!(\Re[\alpha_{t,k,i^{\star}}])^{2}\!+\!(\Im[\alpha_{t,k,i^{\star}}])^{2}
$ follows an exponential distribution, i.e., $|\alpha_{t,k,i^{\star}}|^{2}\!\sim\!\mathrm{Exp}(\rho_{k,i^{\star}})$.
Then $Y_{t,k}(l_{t}^{\mathrm{B}})\!\sim\!\mathrm{Exp}(\rho_{k,i^{\star}}N_{0}W)$ and thus the CDF of $Y_{t,k}(l_{t}^{\mathrm{B}})$ can be written by
\begin{align}
	\Pr[Y_{t,k}(l_{t}^{\mathrm{B}})\leq x]
	=
	1-\exp(-\rho_{k,i^{\star}}N_{0}Wx).
\end{align}
Therefore, the PMF of departure packet number is
\begin{align*}
	 & \Pr\left[D_{t,k}^{\Pi}\!=\!d\big|l_{t}^{\mathrm{B}}\right]
	\!=\!
	\Pr\left[D_{t,k}^{\Pi}\!\leq\!d\!+\!1\big|l_{t}^{\mathrm{B}}\right]
	\!-\!
	\Pr\left[D_{t,k}^{\Pi}\!\leq\!d\big|l_{t}^{\mathrm{B}}\right] \nonumber                                                          \\
	\!=\!
	 & \Pr\bigg[Y_{t,k}(l_{t}^{\mathrm{B}})\!\leq\!\frac{2^{\frac{(d\!+\!1)N_{\mathrm{b}}}{W\tau_{t,k}^{\Pi}}}\!-\!1}{P^{\Pi}}\bigg]
	\!-\!\Pr\bigg[Y_{t,k}(l_{t}^{\mathrm{B}})\!\leq\!\frac{2^{\frac{dN_{\mathrm{b}}}{W\tau_{t,k}^{\Pi}}}\!-\!1}{P^{\Pi}}\bigg].
\end{align*}

\section*{Appendix B: PROOF of LEMMA \ref{lem:P_k}}
\label{proof:P_k}
In fact, $\mathbf{M}_{k}^{(\ell)}$ represents the transition probability matrix of local abstract state eliminating $l_{t}^{\mathrm{B}}$, i.e., $
	(Q_{t,k},A_{t,k}^{\mathrm{s}},A_{t,k}^{\mathrm{d}})
$, conditioned on $l_{t+1}^{\mathrm{B}}\!=\!\ell$.
Then the derivation of \eqref{eqn:P_k} is straightforward.
We have the following discussion on all possible cases for $\mathbf{M}_{k}^{(\ell)}$ defined in Table \ref{tab:M_kl}.
\begin{itemize}[leftmargin=10pt]
	\item
	      \textbf{Case 1 ($
			      Q_{t\!,k}{=}0,
			      1{\leq} A_{t\!,k}^{\mathrm{s}}{\leq} A_{\mathrm{max}},
			      1{\leq} A_{t\!,k}^{\mathrm{d}}{\leq} A_{\mathrm{max}},
			      Q_{t\!+\!1\!,k}{=}0,\allowbreak
			      A_{t\!+\!1\!,k}^{\mathrm{s}}{=}1,
			      A_{t\!+\!1\!,k}^{\mathrm{d}}{=}1
		      $):}
	      This means that sampling and transmission of all packets of the new sample are accomplished with the probability of $L_{k}$ departure packets.

	\item
	      \textbf{Case 2 ($Q_{t\!,k}{=}0,
		      1{\leq} A_{t\!,k}^{\mathrm{s}}{\leq} A_{\mathrm{max}},
		      1{\leq} A_{t\!,k}^{\mathrm{d}}{\leq} A_{\mathrm{max}},
		      1{\leq} Q_{t\!+\!1\!,k}\allowbreak {\leq} L_{k},
		      A_{t\!+\!1\!,k}^{\mathrm{s}}{=}1,
		      A_{t\!+\!1\!,k}^{\mathrm{d}}{=}\min\{A_{t\!,k}^{\mathrm{d}}{+}1,A_{\mathrm{max}}\}
	      $):}
	      This means that sampling and transmission of $(L_{k}\!-\!Q_{t\!+\!1\!,k})$ packets of the new sample are accomplished with the probability of $(L_{k}\!-\!Q_{t\!+\!1\!,k})$ departure packets.

	\item
	      \textbf{Case 3 ($
		      1{\leq} Q_{t\!,k}{\leq} L_{k},
		      1{\leq} A_{t\!,k}^{\mathrm{s}}{\leq} A_{\mathrm{max}},
		      1{\leq} A_{t\!,k}^{\mathrm{d}}{\leq} A_{\mathrm{max}},
		      Q_{t\!+\!1\!,k}\allowbreak {=}0,
		      A_{t\!+\!1\!,k}^{\mathrm{s}}{=}\min\{A_{t\!,k}^{\mathrm{s}}{+}1,A_{\mathrm{max}}\},
		      A_{t\!+\!1\!,k}^{\mathrm{d}}{=}\min\{A_{t\!,k}^{\mathrm{s}}{+}1,\allowbreak A_{\mathrm{max}}\}
	      $):}
	      This means that transmission of all remaining $Q_{t\!+\!1\!,k}$ packets of the current sample is accomplished with the probability of $Q_{t\!+\!1\!,k}$ departure packets.

	\item
	      \textbf{Case 4 ($
		      1{\leq} Q_{t\!,k}{\leq} L_{k},
		      1{\leq} A_{t\!,k}^{\mathrm{s}}{\leq} A_{\mathrm{max}},
		      1{\leq} A_{t\!,k}^{\mathrm{d}}{\leq} A_{\mathrm{max}},
		      1{\leq} \allowbreak Q_{t\!+\!1\!,k}{\leq} Q_{t\!,k},
		      A_{t\!+\!1\!,k}^{\mathrm{s}}{=}\min\{A_{t\!,k}^{\mathrm{s}}{+}1,A_{\mathrm{max}}\},
		      A_{t\!+\!1\!,k}^{\mathrm{d}}{=}\min\allowbreak \{A_{t\!,k}^{\mathrm{d}}{+}1,A_{\mathrm{max}}\}
	      $):}
	      This means that transmission of $(Q_{t\!,k}{-}Q_{t\!+\!1\!,k})$ packets of the current sample is accomplished with the probability of $(Q_{t\!,k}{-}Q_{t\!+\!1\!,k})$ departure packets.
\end{itemize}

\section*{Appendix C: PROOF of THEOREM \ref{thm:analytical_expression}}
\label{proof:analytical_expression}
The approximate value function is given by
\begin{align}
	\label{eqn:proof_the1_1}
	W^{\Pi}(\widetilde{\mathcal{S}}_{t})
	\!=\!
	 & \lim_{T\rightarrow\infty}
	\sum_{t=1}^{T}
	\gamma^{t\!-\!1}
	\Big[\sum_{k\in\mathcal{K}}\big[\!
	\mathbf{e}_{\kappa(l_{t}^{\mathrm{B}}\!,\epsilon(Q_{t\!,k},A_{t\!,k}^{\mathrm{s}},A_{t\!,k}^{\mathrm{d}}))}^{|\mathbb{L}|(L_{k}+1)A_{\mathrm{max}}^{2}}\big]^{\!\mathsf{T}}
	\mathbf{P}_{k}^{t\!-\!1}
	\mathbf{g}_{k}
	\nonumber
	\\
	 & +w_{\mathrm{P}}T_{\mathrm{F}}P^{\Pi}\Big].
\end{align}
Since the reference policy adopts constant transmission time and power allocations, the per-frame cost for transmission power consumption, and thus its discounted cumulative sum corresponding to the second term in \eqref{eqn:local_value_function}, results into a constant.
In fact, $
	\mathbf{e}_{\kappa(l_{t}^{\mathrm{B}},\epsilon(Q_{t\!,k},A_{t\!,k}^{\mathrm{s}},A_{t\!,k}^{\mathrm{d}}))}^{|\mathbb{L}|(L_{k}+1)A_{\mathrm{max}}^{2}}
$ represents the probability vector for a deterministic local abstract state $\widetilde{\mathcal{S}}_{t,k}\!=\!(l_{t}^{\mathrm{B}},Q_{t\!,k},A_{t\!,k}^{\mathrm{s}},A_{t\!,k}^{\mathrm{d}})$, $\mathbf{P}_{k}$ represents the transition probability matrix, and the $\kappa(l_{t}^{\mathrm{B}},\epsilon(Q_{t\!,k},A_{t\!,k}^{\mathrm{s}},A_{t\!,k}^{\mathrm{d}}))$-th entry of $\mathbf{g}_{k}$ represents the per-frame cost for sampling energy cost, and AoIs and outdated AoI penalties at the server.
Since the reference policy samples only when the queue becomes empty, the per-frame cost for sampling is counted only for $Q_{t,k}\!=\!0$.
Therefore, $
	\big[\mathbf{e}_{\kappa(l_{t}^{\mathrm{B}},\epsilon(Q_{t\!,k},A_{t\!,k}^{\mathrm{s}},A_{t\!,k}^{\mathrm{d}}))}^{|\mathbb{L}|(L_{k}+1)A_{\mathrm{max}}^{2}}\big]^{\mathsf{T}}\mathbf{P}_{k}^{t-1}\mathbf{g}_{k}
$ represents the expected per-frame cost in the $t$-th frame.
The derivation from \eqref{eqn:proof_the1_1} to \eqref{eqn:local_value_function} resembles the proof in Appendix B(3) of \cite{9078843}.

\bibliography{IEEEabrv,reference_ICPADS23_v2_230830_abrv}
\bibliographystyle{IEEEtran}

\end{document}